\theoremstyle{plain}
\newtheorem{thm}{Theorem}[section]
\newtheorem{prop}[thm]{Proposition}
\newtheorem{lem}[thm]{Lemma}
\newtheorem{cor}[thm]{Corollary}
\theoremstyle{definition}
\newtheorem{defn}[thm]{Definition}
\newtheorem{exmp}[thm]{Example}
\newtheorem{rem}[thm]{Remark}
\newtheorem{asm}[thm]{Assumption}
\numberwithin{figure}{section}
\numberwithin{table}{section}
\newcommand{\lspace} {
  \vspace{0.8\baselineskip}
}
\newcommand{\single}[1]{
  \langle  #1 \rangle 
}
\newcommand{\abs}[1]{
  \lvert  #1 \rvert
}
\newcommand{\Nat}{\mathrm{Nat}}
\definecolor{arrowred}{rgb}{0,0,0} 
\newcommand{\newword}[1]{\textbf{\textit{#1}}}
\newcommand{\textred}[1]{#1}
\newcommand{\textblue}[1]{#1}
\numberwithin{equation}{section}
\newcommand{\ProbB}{\bar{B}}
\newcommand{\ProbC}{\bar{C}}
\newcommand{\ProbX}{\bar{X}}
\newcommand{\ProbY}{\bar{Y}}
\newcommand{\ProbNX}{\bar{X}}
\newcommand{\ProbNY}{\bar{Y}}
\newcommand{\Prob}{\mathbf{Prob}}
\newcommand{\Meas}{\mathbf{Meas}}
\newcommand{\Set}{\mathbf{Set}}
\mathchardef\mhyphen="2D  
\title {A Binomial Asset Pricing Model in a Categorical Setting}
\thanks{This work was supported by JSPS KAKENHI Grant Number 18K01551.}
\author[T. Adachi, K. Nakajima and Y. Ryu]{Takanori Adachi, Katsushi Nakajima and Yoshihiro Ryu}
\address{Graduate School of Management,
         Tokyo Metropolitan University,
         1-4-1 Marunouchi, Chiyoda-ku, Tokyo 100-0005, Japan}
\email{Takanori Adachi <tadachi@tmu.ac.jp>}
\address{College of International Management,
         Ritsumeikan Asia Pacific University,
         1-1 Jumonjibaru, Beppu, Oita, 874-8577 Japan}
\email{Katsushi Nakajima <knakaji@apu.ac.jp>}
\address{Department of Mathematical Sciences,
         Ritsumeikan University,
         1-1-1 Nojihigashi, Kusatsu, Shiga, 525-8577 Japan}
\email{Yoshihiro Ryu <iti2san@gmail.com>}
\date{\today}
\keywords{
	binomial asset pricing model,
	categorical probability theory,
	generalized filtration
}
\subjclass[2010]{
  Primary 
    91B25,   
    16B50;   
  secondary
    60G20,   
	91Gxx	
}
\begin{document}

\maketitle

\begin{abstract}
Adachi and Ryu introduced a category $\Prob$ of probability spaces whose objects are 
all probability spaces
and whose arrows correspond to measurable functions
 satisfying an absolutely continuous requirement
in \cite{AR_2019}.
In this paper,
we develop a binomial asset pricing model based on $\Prob$.
We introduce generalized filtrations with which we can represent situations
such as
some agents forget information at some specific time.
We investigate the
valuations of financial claims along this type of non-standard filtrations. 

\end{abstract}



\section{Introduction}
\label{sec:intro}

Adachi and Ryu introduced the category
$\Prob$
as an 
\textit{adequate}
candidate of 
the category of probability spaces with 
\textit{good} arrows.
They show the existence  of
the conditional expectation functor 
from 
$\Prob$
to
$\Set$,
which is a natural generalization of
the classical notion of conditional expectation
(\cite{AR_2019}).,

In this paper,
we develop a binomial asset pricing model based on the category
$\Prob$.
Generalized filtrations defined in this setting change not only $\sigma$-algebras but also
probability measures and even underlying sets throughout time.
We introduce a few types of generalized filtrations.
Each of them represents a subjective filtration of an agent.
In other words,
each agent has not only her subjective probability measure but also her own subjective filtration.
For example, some filtration represents the situation in which she forgets the information generated at a specific time.
This paper investigate the valuations of financial claims along these non-standard filtrations. 

\lspace

First, in 
Section \ref{sec:GenFilt},
we review the concept of categorical probability theory
and introduce generalized filtrations and adapted processes and martingales along them.
In this setting, our probability spaces are changing as time goes on.
For example, we may have a bigger underlying set in future than that in past.
This case allows us to have unknown future elementary events.
Section \ref{sec:BAPM} is the heart of this paper in which
we develop a concrete binomial asset pricing model
and investigate a few generalized filtrations
and possibility of valuations along them.
We also provide a complete form of a replication strategy making the valuation possible.

\section{Generalized Filtrations}
\label{sec:GenFilt}

In this section,
we introduce some basic concepts of categorical probability theory
which were mainly introduced in
\cite{AR_2019}
as a preparation for 
Section \ref{sec:BAPM}.

\lspace

Let
$
\bar{X} =
(X, \Sigma_X, \mathbb{P}_X)
$,
$
\bar{Y} =
(Y, \Sigma_Y, \mathbb{P}_Y)
$
and
$
\bar{Z} =
(Z, \Sigma_Z, \mathbb{P}_Z)
$
be
probability spaces
throughout this paper.

\begin{defn}{[Null-preserving functions 
\cite{AR_2019}
]}
A measurable function
$
f
  :
\ProbY
  \to
\ProbX
$
is called
\newword{null-preserving}
if
$
f^{-1}(A)
\in \mathcal{N}_Y
$
for every
$
A \in \mathcal{N}_X
$,
where
$
\mathcal{N}_X
  :=
\mathbb{P}_X^{-1}(0)
  \subset
\Sigma_X
$
and
$
\mathcal{N}_Y
  :=
\mathbb{P}_Y^{-1}(0)
  \subset
\Sigma_Y
$.
\end{defn}

\begin{defn}{[Category $\Prob$
\cite{AR_2019}
]}
\label{defn:cat_prob}
A category
$
\Prob
$
is the category whose objects are
all probability spaces and 
the set of arrows between them are defined by
\begin{align*}
\Prob(
  \ProbX,
&
  \ProbY
)
  :=
\{
f^-
  \mid
f :
  \ProbY
\to
  \ProbX
\textrm{ is a null-preserving function.}
\},
\end{align*}
where
$f^-$
is a symbol
corresponding uniquely to
a function $f$.

We write 
$Id_X$
for an identity \textit{measurable} function 
from
$\ProbX$
to
$\ProbX$,
while writing
$id_X$
for an identity function
from
$X$
to
$X$.
Therefore,
the identity arrow of
a
$\Prob$-object
$\ProbX$
is 
$Id_X^-$.

\end{defn}

\begin{defn}{[Generalized Filtrations]}
Let
$
\textred{
\mathcal{T}
}
$
be a fixed small category
which we sometimes call the
\newword{time domain}.
A
\newword{$\mathcal{T}$-filtration}
is a functor
$
F :
 \mathcal{T}
\to
 \Prob
$.
\end{defn}

\begin{figure}[h]
\begin{equation*}
\xymatrix@C=10 pt@R=7 pt{
	\mathcal{T}
		\ar @{->}_{\textred{F}} [dd]
&
	t_0
		\ar @{->}_{i_0} [rr]
		\ar @/^2pc/^{ i_1 \circ i_0 } [rrrr]
&&
	t_1
		\ar @{->}_{i_1} [rr]
&&
	t_2
		\ar @{->}_{i_2} [rr]
&&
	\dots
\\\\
   \Prob
&
   F t_0
		\ar @{->}^{Fi_0} [rr]
		\ar @/_2pc/_{
		  \textblue{
			F(i_1 \circ i_0) = Fi_1 \circ Fi_0
		  }
		} [rrrr]
&&
   F t_1
		\ar @{->}^{Fi_1} [rr]
&&
   F t_2
		\ar @{->}^{Fi_2} [rr]
&&
	\dots
}
\end{equation*}
\caption{$\mathcal{T}$-filtration}
\label{fig:t_filtration}
\end{figure}

When we say filtrations in the classical setting, 
we keep using a same underlying set
$\Omega$
throughout time.
This situation can be represented by the following diagram.

\begin{equation*}
\xymatrix@C=35 pt@R=15 pt{
\mathcal{T}
&
	t_0
       \ar @{->} [r]
&
	t_1
       \ar @{->} [r]
&
	t_2
       \ar @{->} [r]
&
	\dots
\\
&
	\textred{\mathcal{F}_{t_0}}
       \ar @{->}^{Id_{\Omega}^{-1}} [r]
&
	\textred{\mathcal{F}_{t_1}}
       \ar @{->}^{Id_{\Omega}^{-1}} [r]
&
	\textred{\mathcal{F}_{t_2}}
       \ar @{->}^{Id_{\Omega}^{-1}} [r]
&
	\dots
\\
&
	\Omega
&
	\Omega
       \ar @{->}_{Id_{\Omega}} [l]
&
	\Omega
       \ar @{->}_{Id_{\Omega}} [l]
&
	\dots
       \ar @{->}_{Id_{\Omega}} [l]
}
\end{equation*}

However, in our new setting, the filtration can change not only 
$\sigma$-fields but also
probability measures and underlying sets as the following diagram shows.

\begin{equation*}
\xymatrix@C=35 pt@R=15 pt{
\mathcal{T}
       \ar @{->}_{\textred{F}} [d]
&
	t_0
       \ar @{->} [r]
&
	t_1
       \ar @{->} [r]
&
	t_2
       \ar @{->} [r]
&
	\dots
\\
\textred{\Prob}
&
	\ProbX_{t_0}
       \ar @{->}^{\textred{f^-_{0}}} [r]
&
	\ProbX_{t_1}
       \ar @{->}^{\textred{f^-_{1}}} [r]
&
	\ProbX_{t_2}
       \ar @{->}^{\textred{f^-_{2}}} [r]
&
	\dots
\\
&
	X_{t_0}
&
	X_{t_1}
       \ar @{->}_{f_{0}} [l]
&
	X_{t_2}
       \ar @{->}_{f_{1}} [l]
&
	\dots
       \ar @{->}_{f_{2}} [l]
}
\end{equation*}

One of the implications of this generalization is 
that we can think possibly distorted filtrations
by using adequate null-preserving function
$f_t$.

Actually,
the biggest aim of this paper is to investigate this kind of non-standard filtrations
by using, as a first example, a simple binomial asset pricing model.

\lspace

Let
$
\Meas
$
be the category of measurable spaces that consists of all measurable spaces as its objects
and all measurable functions between them as arrows.
\begin{equation}
U : \Prob \to \Meas
\end{equation}
is a forgetful functor that is it maps
a object $\ProbX$
to
$
(X, \Sigma_X)
$
by dropping its probability measure,
and an arrow
$
f : \ProbX \to \ProbY
$
to
$
f : (X, \Sigma_X) \to (Y, \Sigma_Y)
$.

Later, we will investigate a modification of a given filtration
$F$
to
another filtration
$G$
such that 
$
U \circ F
	=
U \circ G
$,
that is, the situation when they share their measurable space nature.

\lspace

Before going into our concrete example,
we will define adapted processes and martingales over this generalized filtrations.

Let
$F$
be a
fixed
$\mathcal{T}$-filtration
throughout this section.

\begin{defn}{[$F$-Adapted Processes]}
\label{defn:adaPro}
\newword{An $F$-adapted process}
is a collection of natural transformations
\begin{equation}
\tau
  :=
\{
   \tau_s
      :
   \mathcal{T}(s, -)
      \dot{\to}
   L \circ F
\}_{s \in Obj(\mathcal{T})}
\end{equation}

\end{defn}

For a 
$\Prob$-arrow
$
\varphi
  :
\ProbX
  \to
\ProbY
$,
there exists a measurable function
$
f
  :
Y
  \to
X
$
such that
$
\varphi = f^-
$
by its definition.
We write
$
\varphi^+
$
for this $f$.
That is,
$
(\varphi^+)^- = \varphi
$.

Now
Let
$\tau$
be an 
$F$-adapted
process
and
$
i : s \to t
$
be a
$\mathcal{T}$-arrow.
Then, we have the following commutative diagram.
\begin{equation*}
\xymatrix@C=20 pt@R=30 pt{
&
   Id_s
     \ar @{}_{ \mathrel{ \rotatebox[origin=c]{-45}{$\in$} } } @<+6pt> [rd]
     \ar @{|->} [rrr]
     \ar @{|->} [ddd]
&&&
   \tau_{s,s}(Id_s)
     \ar @{}_{ \mathrel{ \rotatebox[origin=c]{-135}{$\in$} } } @<+12pt> [ld]
     \ar @{|->} [ddd]
\\
s
     \ar @{->}_{i} [d]
&&
   \mathcal{T}(s, s)
      \ar @{->}^{\tau_{s,s}} [r]
      \ar @{->}_{\mathcal{T}(s, i)} [d]
&
   L(Fs)
      \ar @{->}^{L(Fi)} [d]
&
\\
t
&&
   \mathcal{T}(s, t)
      \ar @{->}_{\tau_{s,t}} [r]
&
   L(Ft)
&
\\
&
   i
     \ar @{}_{ \mathrel{ \rotatebox[origin=c]{45}{$\in$} } } @<+12pt> [ru]
     \ar @{|->} [rr]
&&
   \tau_{s,t}(i)
     \ar @{}_-{=} @<+6pt> [r]
&
   L(Fi)(\tau_{s,s}(Id_s))
     \ar @{}_{ \mathrel{ \rotatebox[origin=c]{135}{$\in$} } } @<+6pt> [lu]
}
\end{equation*}
For 
$
s \in Obj(\mathcal{T})
$
pick a random variable
$v_s$
satisfying
$
[
  v_s
]_{\sim_{\mathbb{P}_{Fs}}}
  =
\tau_{s,s}(Id_s)
$.
Then, we have
\begin{equation}
\tau_{s,t}(i)
  =
[
   v_s \circ (Fi)^+
]_{\sim_{\mathbb{P}_{Ft}}}.
\end{equation}
That is,
$
\tau_{s,t}(i)
$
is
$(Fi)$-measurable.
%


\begin{prop}
Let
$
AP(F)
$
be the set of all
$F$-adapted processes.
Then,
\begin{equation}
\label{eq:APeq}
AP(F)
       \cong
\prod_{t \in Obj(\mathcal{T})}
  L(Ft) .
\end{equation}
\end{prop}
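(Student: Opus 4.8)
The plan is to recognise \eqref{eq:APeq} as a packaged form of the Yoneda lemma.

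First I would unwind the definition. Since $L$ is a $\Set$-valued functor and $F : \mathcal{T} \to \Prob$ is a functor on the small category $\mathcal{T}$, the composite $L \circ F : \mathcal{T} \to \Set$ is again a functor. By Definition \ref{defn:adaPro}, an element of $AP(F)$ is nothing more than an arbitrary family $\{\tau_s\}_{s \in Obj(\mathcal{T})}$, where each $\tau_s$ is a natural transformation $\mathcal{T}(s, -) \dot{\to} L \circ F$ and there is no coherence condition tying the $\tau_s$ for distinct $s$ together. Hence, as a set,
\[
AP(F) \;=\; \prod_{s \in Obj(\mathcal{T})} \Nat\bigl( \mathcal{T}(s, -),\, L \circ F \bigr).
\]

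Next I would apply the Yoneda lemma to each factor of this product. For a fixed $s \in Obj(\mathcal{T})$, the evaluation-at-the-identity map $\tau_s \mapsto \tau_{s,s}(Id_s)$ is a bijection
\[
\Nat\bigl( \mathcal{T}(s, -),\, L \circ F \bigr) \;\cong\; (L \circ F)(s) \;=\; L(Fs),
\]
whose inverse sends a class $\xi \in L(Fs)$ to the natural transformation whose $t$-component is the function $\mathcal{T}(s,t) \to L(Ft)$, $i \mapsto L(Fi)(\xi)$ — naturality here being exactly the functoriality of $L \circ F$. Taking the product of these bijections over $s \in Obj(\mathcal{T})$ yields $AP(F) \cong \prod_{s \in Obj(\mathcal{T})} L(Fs)$, which is precisely \eqref{eq:APeq} after renaming the index variable.

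I do not foresee a genuine obstacle: the required bijection has in fact already been exhibited in the commutative diagram preceding the proposition, where it is shown that $\tau_{s,t}(i) = L(Fi)(\tau_{s,s}(Id_s))$, i.e. that an adapted process $\tau$ is reconstructed from the data $(\tau_{s,s}(Id_s))_s$ exactly as the inverse Yoneda map prescribes, so no extra naturality check is needed. The one point worth stating carefully is that Definition \ref{defn:adaPro} imposes no compatibility across different indices $s$, so that $AP(F)$ is a bare product of sets of natural transformations rather than a limit over $\mathcal{T}$; once this is granted, the statement is a term-by-term instance of the Yoneda lemma.
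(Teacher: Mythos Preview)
Your proposal is correct and follows essentially the same approach as the paper: the paper's proof also invokes the Yoneda lemma to obtain, for each $t \in Obj(\mathcal{T})$, a bijection $y_t : \Nat(\mathcal{T}(t,-), L \circ F) \cong (L \circ F)(t)$, and then takes the product $\prod_t y_t$ to conclude. Your write-up is simply more explicit about why $AP(F)$ is a bare product over $Obj(\mathcal{T})$ and about the form of the inverse Yoneda map, but the argument is the same.
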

\begin{proof}
By Yoneda Lemma, we have
for 
$
t
 \in
Obj(\mathcal{T})
$,
\begin{equation}
y_t :
\Nat(\mathcal{T}(t, -), L \circ F)
  \cong
(L \circ F)t .
\end{equation}
Then,
$
\prod_{t \in Obj(\mathcal{T})}
  y_t
$
is an isomorphism denoting
(\ref{eq:APeq}).

\end{proof}

For
$
\textblue{
x
}
	\in
AP(F)
$,
we sometimes write
\begin{equation}
x
	=
\{
	\textred{x_t}
\}_{
	t \in Obj(\mathcal{T})
}
\end{equation}
where
\begin{equation}
\textred{
x_t
}
	:=
x(t)
	\in
L(Ft) .
\end{equation}

\begin{rem}
For an arrow
$
i : s \to t
$
in
$\mathcal{T}$,
in general,
$Fs$
and
$Ft$
are different probability spaces.
So we cannot (for example) add two random variables
$x_s \in L^1(Fs)$
and
$x_t \in L^1(Ft)$
whose domains are
$\tilde{Fs}$
and
$\tilde{Ft}$.
\begin{equation*}
\xymatrix@C=30 pt@R=10 pt{
	s
		\ar @{->}^{i} [r]
&
	t
\\
	Fs
		\ar @{->}^{Fi} [r]
&
	Ft
\\
	\tilde{Fs}
		\ar @{->}_{x_s} [dd]
&
	\tilde{Ft}
		\ar @{->}_{(Fi)^+} [l]
		\ar @{->}_{x_t} @<-2pt> [dd]
		\ar @{->}^{\textblue{x_s \circ (Fi)^+}} @<+2pt> [dd]
\\\\
	\mathbb{R}
&
	\mathbb{R}
}
\end{equation*}
In order to import 
$x_s$
into
$L^1(Ft)$,
we take 
$
x_s \circ (Fi)^+
$
as its proxy.
This fact allows us to treat 
$L^1(Ft)$
as a vector space containing all preceding random variables
$x_s \in L^1(Fs)$
with
$s \le t$.

\end{rem}

Next, we go into the definition of martingales.
In order to make it possible, we need a concept of conditional expectations in
the category
$\Prob$
which was introduced in
\cite{AR_2019}.

\begin{thm} {\normalfont{[Conditional Expectation \cite{AR_2019}]}}
\label{thm:cond_E}
Let 
 $f^- : \ProbX \to \ProbY$
be a 
$\Prob$-arrow.
For all 
$v \in \mathcal{L}^1(\ProbNY)$
and
$A \in \Sigma_X$,
there exists
$u \in \mathcal{L}^1(\ProbNX)$
satisfying the following equation.
\begin{equation}
\label{eq:ceGen}
 \int_A u \, d \mathbb{P}_X
 =
 \int_{f^{-1}(A)} v \, d \mathbb{P}_Y  .
\end{equation}
We call 
$u$
a \newword{conditional expectation} along $f^-$
and denote it by
$E^{f^-}(v)$.
\end{thm}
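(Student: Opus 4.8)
The plan is to obtain $u$ as a Radon--Nikodym derivative. First I would introduce a set function on $(X, \Sigma_X)$ by
\begin{equation*}
\nu(A) := \int_{f^{-1}(A)} v \, d\mathbb{P}_Y, \qquad A \in \Sigma_X .
\end{equation*}
This is well defined: $f$ is a measurable function, so $f^{-1}(A) \in \Sigma_Y$, and $v \in \mathcal{L}^1(\ProbNY)$, so the integral is finite. Writing $v = v^+ - v^-$ with $v^\pm \ge 0$ integrable, I would set $\nu^\pm(A) := \int_{f^{-1}(A)} v^\pm \, d\mathbb{P}_Y$, so that $\nu = \nu^+ - \nu^-$; the decomposition lets me work with honest positive measures and avoid signed-measure bookkeeping.

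Next I would verify that each $\nu^\pm$ is a finite positive measure on $\Sigma_X$. Finiteness is immediate since $\nu^\pm(X) = \int_Y v^\pm \, d\mathbb{P}_Y < \infty$. For countable additivity, given pairwise disjoint $(A_n)_n$ in $\Sigma_X$, the sets $(f^{-1}(A_n))_n$ are pairwise disjoint in $\Sigma_Y$ and $f^{-1}(\bigcup_n A_n) = \bigcup_n f^{-1}(A_n)$; applying the monotone convergence theorem to the partial sums of $v^\pm$ restricted to the pieces $f^{-1}(A_n)$ yields $\nu^\pm(\bigcup_n A_n) = \sum_n \nu^\pm(A_n)$.

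The key step is absolute continuity, $\nu^\pm \ll \mathbb{P}_X$, and this is where the hypothesis on $f$ enters. If $A \in \Sigma_X$ with $\mathbb{P}_X(A) = 0$, then $A \in \mathcal{N}_X$, and since $f$ is null-preserving we get $f^{-1}(A) \in \mathcal{N}_Y$, i.e. $\mathbb{P}_Y(f^{-1}(A)) = 0$; hence $\int_{f^{-1}(A)} v^\pm \, d\mathbb{P}_Y = 0$, so $\nu^\pm(A) = 0$. This is the only place the null-preserving condition is used, and it is precisely the absolute-continuity input the theorem requires.

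Finally, since $\mathbb{P}_X$ is a probability measure (in particular $\sigma$-finite) and each $\nu^\pm$ is finite, the Radon--Nikodym theorem produces nonnegative $u^\pm \in \mathcal{L}^1(\ProbNX)$ with $\nu^\pm(A) = \int_A u^\pm \, d\mathbb{P}_X$ for all $A \in \Sigma_X$. Setting $u := u^+ - u^-$ gives $u \in \mathcal{L}^1(\ProbNX)$ and, for every $A \in \Sigma_X$,
\begin{equation*}
\int_A u \, d\mathbb{P}_X = \nu^+(A) - \nu^-(A) = \int_{f^{-1}(A)} v \, d\mathbb{P}_Y ,
\end{equation*}
which is (\ref{eq:ceGen}). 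I do not anticipate a serious obstacle: once the null-preserving condition is recognized as exactly the absolute-continuity hypothesis, the statement follows directly from Radon--Nikodym, and the only points needing a little care are the reduction to nonnegative $v$ and checking that $\nu^\pm$ are genuine measures.
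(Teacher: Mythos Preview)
Your argument is correct. Note, however, that the present paper does not actually give a proof of this theorem: it is quoted from \cite{AR_2019} and stated here without proof, so there is no in-paper argument to compare against. Your Radon--Nikodym approach is the standard one and is almost certainly what the cited reference does; the key observation---that the null-preserving condition on $f$ is exactly the absolute-continuity hypothesis $\nu \ll \mathbb{P}_X$ needed to invoke Radon--Nikodym---is precisely the reason that notion is singled out in the definition of $\Prob$-arrows.
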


\begin{thm} {\normalfont{[Conditional Expectation Functor \cite{AR_2019}]}}
\label{thm:fun_E}
There exists a functor
$
\textred{\mathcal{E}} : \Prob^{\textred{op}} \to \Set
$
as following:
\begin{equation*}
\xymatrix@C=15 pt@R=30 pt{
    X
  &
    \ProbX
     \ar @{->}_{f^-} [d]
     \ar @{|->}^{\mathcal{E}} [rr]
  &&
    \textred{\mathcal{E}\ProbX}
     \ar @{}^-{:=} @<-6pt> [r]
  &
    L^{1}(\ProbNX)
     \ar @{}^-{\ni} @<-6pt> [r]
  &
    [E^{f^-}(v)]_{\sim_{\mathbb{P}_X}}
\\
    Y
      \ar @{->}^f [u]
  &
    \ProbY
     \ar @{|->}^{\mathcal{E}} [rr]
  &&
    \mathcal{E}\ProbY
     \ar @{}^-{:=} @<-6pt> [r]
     \ar @{->}_{
       \mathcal{E} f^-
     } [u]
  &
    L^{1}(\ProbNY)
     \ar @{}^-{\ni} @<-6pt> [r]
  &
    [v]_{\sim_{\mathbb{P}_Y}}.
     \ar @{|->}_{
       \textred{\mathcal{E} f^-}
     } @<9pt> [u]
}
\end{equation*}
We call
$\mathcal{E}$
a 
\newword{conditional expectation functor}.
\end{thm}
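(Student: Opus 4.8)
The plan is to read off the existence part directly from Theorem~\ref{thm:cond_E} and to spend the work on (i) checking that the prescription descends to equivalence classes and (ii) verifying functoriality; the only analytic input beyond Theorem~\ref{thm:cond_E} is a uniqueness observation. First I would record that observation: if $u_1,u_2\in\mathcal{L}^1(\ProbNX)$ both satisfy $\int_A u_k\,d\mathbb{P}_X=\int_{f^{-1}(A)}v\,d\mathbb{P}_Y$ for all $A\in\Sigma_X$, then $\int_A(u_1-u_2)\,d\mathbb{P}_X=0$ for every $A\in\Sigma_X$, so $u_1=u_2$ holds $\mathbb{P}_X$-a.e. Consequently $[E^{f^-}(v)]_{\sim_{\mathbb{P}_X}}\in L^1(\ProbNX)$ depends only on $v$, which is what lets us speak of $\mathcal{E}f^-$ at all.

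Next I would check that $\mathcal{E}f^-$ is well defined as a function $\mathcal{E}\ProbY\to\mathcal{E}\ProbX$. If $v\sim_{\mathbb{P}_Y}v'$ then the integrands agree off a $\mathbb{P}_Y$-null set, so $\int_{f^{-1}(A)}v\,d\mathbb{P}_Y=\int_{f^{-1}(A)}v'\,d\mathbb{P}_Y$ for every $A\in\Sigma_X$; hence any representative of $E^{f^-}(v)$ also satisfies the defining equation of Theorem~\ref{thm:cond_E} for $v'$, and by the uniqueness observation $[E^{f^-}(v)]_{\sim_{\mathbb{P}_X}}=[E^{f^-}(v')]_{\sim_{\mathbb{P}_X}}$. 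Thus $\mathcal{E}f^-\colon[v]_{\sim_{\mathbb{P}_Y}}\mapsto[E^{f^-}(v)]_{\sim_{\mathbb{P}_X}}$ is a legitimate map of sets; since the target category is $\Set$ no linearity need be checked.

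Then comes functoriality. For the identity arrow $Id_X^-$ of $\ProbX$ and $v\in\mathcal{L}^1(\ProbNX)$, the equation $\int_A v\,d\mathbb{P}_X=\int_{Id_X^{-1}(A)}v\,d\mathbb{P}_X$ holds trivially, so $v$ itself satisfies the defining equation and $E^{Id_X^-}(v)=v$ $\mathbb{P}_X$-a.e.; therefore $\mathcal{E}(Id_X^-)=id_{\mathcal{E}\ProbX}$. For composition, take $\Prob$-arrows $f^-\colon\ProbX\to\ProbY$ and $g^-\colon\ProbY\to\ProbZ$; by the way composition is defined in $\Prob$ their composite is $g^-\circ f^-=(f\circ g)^-$ with $f=(f^-)^+$, $g=(g^-)^+$, and $(f\circ g)^{-1}(A)=g^{-1}(f^{-1}(A))$ for $A\in\Sigma_X$. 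Given $w\in\mathcal{L}^1(\ProbNZ)$ I would set $u:=E^{f^-}(E^{g^-}(w))$, which is well defined in $L^1(\ProbNX)$ by the previous step (it does not matter that $E^{g^-}(w)$ is only pinned down $\mathbb{P}_Y$-a.e.), and compute, for every $A\in\Sigma_X$,
\[
\int_A u\,d\mathbb{P}_X
=\int_{f^{-1}(A)}E^{g^-}(w)\,d\mathbb{P}_Y
=\int_{g^{-1}(f^{-1}(A))}w\,d\mathbb{P}_Z
=\int_{(f\circ g)^{-1}(A)}w\,d\mathbb{P}_Z,
\]
applying the defining equation of $E^{f^-}$ to the set $A$ and that of $E^{g^-}$ to the set $f^{-1}(A)\in\Sigma_Y$. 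By the uniqueness observation $u=E^{g^-\circ f^-}(w)$ $\mathbb{P}_X$-a.e., i.e.\ $\mathcal{E}(g^-\circ f^-)=\mathcal{E}f^-\circ\mathcal{E}g^-$, which is precisely the contravariance law for $\mathcal{E}\colon\Prob^{\mathrm{op}}\to\Set$.

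I expect the genuine obstacle to be bookkeeping rather than analysis: one must keep straight that an arrow $f^-\colon\ProbX\to\ProbY$ of $\Prob$ is carried by a measurable function $f\colon Y\to X$, so that composing $\Prob$-arrows reverses the direction of the underlying functions and pulls sets back in the order $g^{-1}\circ f^{-1}$. Getting this orientation right is exactly what makes the tower computation above type-check and forces $\mathcal{E}$ to be \emph{contravariant}; everything else reduces to Theorem~\ref{thm:cond_E} together with the one-line uniqueness remark.
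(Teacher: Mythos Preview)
Your argument is correct and is the standard way one establishes that conditional expectation defines a contravariant functor: a.e.\ uniqueness from the defining integral equation, well-definedness on $L^1$-classes, the trivial identity check, and the tower computation for composition. Note, however, that the paper does not actually supply a proof of this theorem; it is imported verbatim from \cite{AR_2019} and stated without argument, so there is no ``paper's own proof'' to compare against. Your write-up fills in exactly what the cited reference would be expected to contain.
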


\begin{defn}{[$F$-Martingales]}
\label{defn:martPr1}
Let
$
F : \mathcal{T} \to \Prob
$
be a functor.
An 
\newword{$F$-martingale}
is 
an $F$-adapted process
$
x
  \in
AP(F)
$
such that for every 
$\mathcal{T}$-arrow
$
i : s \to t
$,
\begin{equation}
(\mathcal{E} \circ F)
  i
  (x(t))
=
 x(s) .
\end{equation}
\end{defn}

\begin{figure}[h]
\begin{equation*}
\xymatrix@C=15 pt@R=30 pt{
	s
		\ar @{->}_i [d]
		\ar @{|->}^F [rr]
  &&
    Fs
		\ar @{->}_{Fi} [d]
		\ar @{|->}^{\mathcal{E}} [rr]
  &&
	\mathcal{E}(Fs)
		\ar @{}^-{:=} @<-6pt> [r]
  &
    L^{1}(Fs)
     \ar @{}^-{\ni} @<-6pt> [r]
  &
    \textred{x_s} = [\textblue{E^{Fi}(v)}]_{\sim_{\mathbb{P}_{Fs}}}
\\
    t
		\ar @{|->}^F [rr]
  &&
	Ft
     \ar @{|->}^{\mathcal{E}} [rr]
  &&
	\mathcal{E}(Ft)
		\ar @{}^-{:=} @<-6pt> [r]
		\ar @{->}_{
			\textred{\mathcal{E}(Fi)}
		} [u]
  &
    L^{1}(Ft)
     \ar @{}^-{\ni} @<-6pt> [r]
  &
    \textred{x_t} = [\textblue{v}]_{\sim_{\mathbb{P}_{Ft}}}.
     \ar @{|->}_{
       \textred{\mathcal{E}(Fi)}
     } @<9pt> [u]
}
\end{equation*}
\caption{$F$-martingale}
\label{fig:f_martingale}
\end{figure}

\section{A Binomial Asset Pricing Model}
\label{sec:BAPM}

In this section,
we introduce a binomial asset pricing model based on the category
$\Prob$.

\subsection{Filtration $\mathcal{B}$}
\label{sec:FiltB}

First, we define a general scheme of our model by introducing a filtration
$\mathcal{B}$.

\begin{defn}{[Filtration $\mathcal{B}$]}
\label{defn:tb1}
Let
$\omega$
be the category 
whose objects are all integers starting with 
$0$
and for each pair of
integers
$m$
and
$n$
with
$m \le n$
there is a unique arrow
$
*_{m, n}
	:
m
	\to
n
$.
That is,
$\omega$
is the category corresponding to the integer set
$\mathbb{N}$
with the usual total order.
Let
$
\textred{
    \mathbf{p}
}
    :=
\{
    \textred{p_i}
\}_{i = 1, 2, \dots}
$
be an infinite sequence
of real numbers
$
p_i \in [0, 1]
$.
We define an $\omega$-filtration
$
\mathcal{B}
	:=
\mathcal{B}^{\mathbf{p}}
	: \omega \to \Prob
$
in the following way.

For
an object
$n$
of
$\omega$,
$\mathcal{B}n$
is 
a probability space
$
\bar{B}_n :=
(B_n, \Sigma_n, \mathbb{P}_n)
$
whose components are defined as follows:
\begin{enumerate}
\item
$
B_n := \{
	0, 1
\}^n
$,
the set of all binary numbers of 
$t$
digits,

\item
$
\Sigma_n
	:=
2^{B_n}
$,

\item
for
$
a
    :=
d_1 d_2 \dots d_n
    \in
B_n
$
where
$
d_i \in \{ 0, 1\}
\;
(i = 1, 2, \dots n)
$.
$
\textred{
    \mathbb{P}_n
}
    :
\Sigma_n
    \to
[0,1]
$
is the probability measure defined by
\begin{equation}
\mathbb{P}_n(\{a\})
    :=
\prod_{i=1}^n
p_i^{d_i}
(1-p_i)^{1-d_i} .
\end{equation}
\end{enumerate}

For integers
$m$
and
$n$
with
$
m < n
$,
we define
\begin{equation}
\mathcal{B}(
	*_{m, n}
)
	:=
f_{m,n}^-
	:=
(
f_m
	\circ
f_{m+1}
	\circ
\dots
	\circ
f_{n-1}
)^-
\end{equation}
where
$
f_n
	:=
(
\mathcal{B}(
	*_{n, n+1}
)
)^+
$
is a predefined null-preserving function
from
$B_{n+1}$
to
$B_{n}$.

The filtration
$\mathcal{B}$
is called
\newword{non-trivial}
if there exists
$i$
such that
$
0 < p_i < 1
$.
\end{defn}

Note that
any function from 
$B_n$
is measurable
since
$\Sigma_n$
is a powerset of 
$B_n$.

\begin{equation*}
\xymatrix@C=25 pt@R=15 pt{
\textred{\omega}
       \ar @{->}_{\textred{\mathcal{B}}} [d]
&
	0
       \ar @{->}^{i_{0}} [r]
&
	1
       \ar @{->}^{i_{1}} [r]
&
	\dots
       \ar @{->}^{i_{n-1}} [r]
&
	n
       \ar @{->}^{i_{n}} [r]
&
	n+1
       \ar @{->}^{i_{n+1}} [r]
&
	\dots
\\
\textred{\Prob}
&
	\ProbB_0
       \ar @{->}^{\textred{f^-_{0}}} [r]
&
	\ProbB_1
       \ar @{->}^{\textred{f^-_{1}}} [r]
&
	\dots
       \ar @{->}^{\textred{f^-_{n-1}}} [r]
&
	\ProbB_n
       \ar @{->}^{\textred{f^-_{n}}} [r]
&
	\ProbB_{n+1}
       \ar @{->}^{\textred{f^-_{n+1}}} [r]
&
	\dots
}
\end{equation*}

As we introduced,
the functor
$
\textred{
	\mathcal{B}
}
$
is a
generalized filtration,
representing a filtration over
the classical binomial model,
for example developed in
\cite{shreve_I}.

The classical version requires the terminal time horizon
$\textred{T}$
for determining the underlying set
$
\textred{\Omega}
	:=
\{0, 1\}^T
$
while our version does not require it
since the time variant probability spaces can evolve without any limit.
That is, 
\textblue{
our version allows unknown future elementary events,
which, we believe, shows a big philosophical difference from the Kolmogorov world.
}

\lspace

In order to see a variety of filtrations, we introduce two candidates of 
$f_n$.

\begin{defn}{[Candidates of $f_n$]}
\begin{enumerate}
\item
$
\textred{f^{full}_n}
$
\begin{equation*}
\xymatrix@C=35 pt@R=30 pt{
    B_{n+1}
        \ar @{->}^{\textred{f^{full}_n}} [r]
        \ar @{}_{ \mathrel{ \rotatebox[origin=c]{90}{$\in$} } } @<+6pt> [d]
&
	B_n
        \ar @{}_{ \mathrel{ \rotatebox[origin=c]{90}{$\in$} } } @<+6pt> [d]
\\
	d_1 \dots d_n d_{n+1}
        \ar @{|->}^{\textred{f^{full}_n}} [r]
&
	d_1 \dots d_n
}
\end{equation*}
 
\item
$
\textred{f^{drop}_n}
$
\begin{equation*}
\xymatrix@C=35 pt@R=30 pt{
    B_{n+1}
        \ar @{->}^{\textred{f^{drop}_n}} [r]
        \ar @{}_{ \mathrel{ \rotatebox[origin=c]{90}{$\in$} } } @<+6pt> [d]
&
	B_n
        \ar @{}_{ \mathrel{ \rotatebox[origin=c]{90}{$\in$} } } @<+6pt> [d]
\\
	d_1 \dots d_{n-1}, d_n d_{n+1}
        \ar @{|->}^{\textred{f^{drop}_n}} [r]
&
	d_1 \dots d_{n-1} \, \textblue{0}
}
\end{equation*}
 
\end{enumerate}
\end{defn}
The function
$
f^{drop}_n
$
can be interpreted to forget what happens at time $n$.

\lspace

Note that the function
$ f^{full}_n $
is always null-preserving
while
$ f^{drop}_n $
is null-preserving if and only if
$p_n = 0$.

\begin{exmp}{[Filtrations]}
As we mentioned in 
Definition \ref{defn:tb1},
all we need to determine the filtration is to specify
$
f_n
	:
B_{n+1}
	\to
B_{n}
$.
We have three examples of filtration
$\mathcal{B}$.
For
$j = 1, 2, \dots, n$,
\begin{enumerate}
\item
\textblue{Classical filtration}:
\begin{equation*}
\textred{f_n}
	:=
\textblue{f^{full}_n} .
\end{equation*}

\item
\textblue{Drop-$k$}:
\begin{equation*}
\textred{f_n}
	:= \begin{cases}
\textblue{f^{drop}_n}
\; \textrm{if} \;
n = k,
	\\
\textblue{f^{full}_n}
\; \textrm{if} \;
n \ne k.
\end{cases}
\end{equation*}

\item
\textblue{Elderly person}:
For
fixed numbers
$
k_0,
k_1
	\in
\mathbb{N}
$,
\begin{equation*}
\textred{f_n}
	:= \begin{cases}
\textblue{f^{drop}_n}
\; \textrm{if} \;
k_0 \le n \le T - k_1
	\\
\textblue{f^{full}_n}
\; \textrm{if} \;
0 \le n < k_0
	\, \textrm{or} \,
T - k_1 < n \le T .
\end{cases}
\end{equation*}

\end{enumerate}

\end{exmp}

\begin{prop}
For
a $\Prob$-arrow
$
\textred{f^-_n}
	:
\ProbB_{n}
	\to
\ProbB_{n+1}
$,
$
\textred{v}
	\in
L^1(\ProbB_{n+1})
$
and
$
\textred{a}
	\in
B_n
$,
\begin{equation}
\label{eq:ceBin}
\textred{
	E^{f_n^-}(v)
}(a)
\mathbb{P}_n(\{a\})
	=
\sum_{b \in \textblue{f_n^{-1}(a)}}
	v(b) 
	\mathbb{P}_{n+1}(\{b\}) .
\end{equation}
\end{prop}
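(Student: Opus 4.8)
The plan is to specialize Theorem~\ref{thm:cond_E} to the binomial setting, where all $\sigma$-algebras are powersets, so that every subset is measurable and integrals become finite sums weighted by the point masses $\mathbb{P}_n(\{a\})$. Concretely, for the $\Prob$-arrow $f_n^- : \ProbB_n \to \ProbB_{n+1}$ (which, recall, comes from the null-preserving function $f_n : B_{n+1} \to B_n$), Theorem~\ref{thm:cond_E} guarantees that $u := E^{f_n^-}(v) \in \mathcal{L}^1(\ProbB_n)$ satisfies
\begin{equation*}
\int_A u \, d\mathbb{P}_n = \int_{f_n^{-1}(A)} v \, d\mathbb{P}_{n+1}
\end{equation*}
for every $A \in \Sigma_n$. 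The idea is to feed in the singleton $A = \{a\}$ for $a \in B_n$ and evaluate both sides.

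First I would observe that since $B_n$ is finite and $\Sigma_n = 2^{B_n}$, any integrable function on $\ProbB_n$ against the discrete measure $\mathbb{P}_n$ satisfies $\int_{\{a\}} u \, d\mathbb{P}_n = u(a)\,\mathbb{P}_n(\{a\})$; likewise on $\ProbB_{n+1}$. Next I would compute the preimage: $f_n^{-1}(\{a\}) = \{ b \in B_{n+1} \mid f_n(b) = a \}$, a finite subset of $B_{n+1}$, so that $\int_{f_n^{-1}(\{a\})} v \, d\mathbb{P}_{n+1} = \sum_{b \in f_n^{-1}(a)} v(b)\,\mathbb{P}_{n+1}(\{b\})$. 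Substituting $u = E^{f_n^-}(v)$ and equating the two sides gives exactly \eqref{eq:ceBin}. One should also note that the equation determines $E^{f_n^-}(v)(a)$ uniquely whenever $\mathbb{P}_n(\{a\}) > 0$, and is vacuous (both sides zero, since $f_n^{-1}(a) \subseteq \mathcal{N}_{n+1}$ by null-preservation) when $\mathbb{P}_n(\{a\}) = 0$ — though this uniqueness remark is not strictly needed for the stated identity.

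There is essentially no hard part here; the statement is a direct unwinding of Theorem~\ref{thm:cond_E} in the discrete case. The only point requiring a word of care is the reduction of the abstract Lebesgue integral $\int_A u\, d\mathbb{P}_n$ to a finite sum: this is immediate because $\mathbb{P}_n$ is a finite convex combination of Dirac masses $\sum_{a \in B_n} \mathbb{P}_n(\{a\})\,\delta_a$, and $u$ is automatically measurable and bounded (hence integrable) as a function on a finite set. With that remark in place, the proof is two lines of substitution. I would present it as: apply \eqref{eq:ceGen} with $A = \{a\}$, rewrite both integrals as sums over singletons and over $f_n^{-1}(a)$ respectively, and read off \eqref{eq:ceBin}.
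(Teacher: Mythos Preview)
Your argument is correct and is exactly the natural one: specialize equation~\eqref{eq:ceGen} of Theorem~\ref{thm:cond_E} to the singleton $A=\{a\}\in\Sigma_n$, and use that on the finite spaces $(B_n,2^{B_n},\mathbb{P}_n)$ every integral is a finite sum of point masses. The paper in fact states this proposition without proof (treating it as an immediate consequence of Theorem~\ref{thm:cond_E} in the discrete setting), so there is nothing further to compare; your additional remark about the null-preservation case $\mathbb{P}_n(\{a\})=0$ is a correct side observation but, as you say, not needed for the identity itself.
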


Especially, 
with the classical filtration, we have
\begin{equation}
f_n^{-1}(a)
	=
(f^{full}_n)^{-1}(a)
	=
\{
	a0, a1
\} .
\end{equation}
Hence
\begin{align}
\textred{
	E^{f_n^-}(v)
}(a)
	&=
v(a0)
\frac{
	\mathbb{P}_{n+1}(\{a0\})
}{
	\mathbb{P}_{n}(\{a\})
}
	+
v(a1)
\frac{
	\mathbb{P}_{n+1}(\{a1\})
}{
	\mathbb{P}_{n}(\{a\})
}
\nonumber
	\\&=
v(a0)
(1 - p_{n+1})
	+
v(a1)
p_{n+1} .
\end{align}

\begin{defn}{[$\mathcal{B}$-Adapted Process $\xi_n$]}
For
$
n = 1, 2, \dots
$ 
define
a $\mathcal{B}$-adapted process
$
\textred{
\xi_n
}
$
by
\begin{equation*}
\xymatrix{
    B_n
        \ar @{->}^{\textred{\xi_n}} [rr]
        \ar @{}_{ \mathrel{ \rotatebox[origin=c]{90}{$\in$} } } @<+6pt> [d]
&&
    \mathbb{R}
        \ar @{}_{ \mathrel{ \rotatebox[origin=c]{90}{$\in$} } } @<+6pt> [d]
\\
	d_1 d_2 \dots d_n
        \ar @{|->}^{\textred{\xi_n}} [rr]
&&
	2 d_n - 1
}
\end{equation*}
\end{defn}


\begin{prop}
For
$
a \in B_n
$ with
$
\mathbb{P}_n(a) \ne 0
$,
\begin{align*}
\textred{E^{f_n^-}(\xi_{n+1})}
(a)
	&=
\sum_{e \in I_{n}(1, a)}
	\frac{\mathbb{P}_{n+1}(e)}{\mathbb{P}_n(a)}
-
\sum_{e \in I_{n}(0, a)}
	\frac{\mathbb{P}_{n+1}(e)}{\mathbb{P}_n(a)}
	\\&=
\#(f_n^{-1}(a)) p_{n+1}
	- 
\#I_{n}(0, a)
\end{align*}
where
\begin{equation*}
\textred{
I_{n}(j, a)
}
	:=
\{
	e \in f_n^{-1}(a)
		\mid
	(e)_{n+1} = j
\}
\end{equation*}
for 
$j = 0, 1$,
and
$
\textred{\#}A
$
denotes the cardinality of the set 
$A$.
\end{prop}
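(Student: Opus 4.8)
The plan is to derive both equalities from the fibrewise description of the conditional expectation --- equation (\ref{eq:ceBin}) --- applied to the random variable $\xi_{n+1}$, and then to evaluate the resulting finite sum by using the explicit product form of the measures $\mathbb{P}_n$ and $\mathbb{P}_{n+1}$.

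First I would apply (\ref{eq:ceBin}) with $v := \xi_{n+1} \in L^1(\ProbB_{n+1})$. Since $\mathbb{P}_n(\{a\}) \neq 0$ by hypothesis, dividing by it gives
\begin{equation*}
E^{f_n^-}(\xi_{n+1})(a)
=
\sum_{b \in f_n^{-1}(a)}
  \xi_{n+1}(b)\,
  \frac{\mathbb{P}_{n+1}(\{b\})}{\mathbb{P}_n(\{a\})} .
\end{equation*}
By the definition of $\xi_{n+1}$ we have $\xi_{n+1}(b) = 2(b)_{n+1} - 1$, which equals $+1$ exactly when $(b)_{n+1} = 1$ and $-1$ exactly when $(b)_{n+1} = 0$. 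Since $f_n^{-1}(a)$ is the disjoint union $I_n(0,a) \sqcup I_n(1,a)$, the sum splits as
\begin{equation*}
E^{f_n^-}(\xi_{n+1})(a)
=
\sum_{e \in I_n(1,a)} \frac{\mathbb{P}_{n+1}(\{e\})}{\mathbb{P}_n(\{a\})}
-
\sum_{e \in I_n(0,a)} \frac{\mathbb{P}_{n+1}(\{e\})}{\mathbb{P}_n(\{a\})} ,
\end{equation*}
which is the first displayed equality in the statement.

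For the second equality I would compute the ratio $\mathbb{P}_{n+1}(\{e\})/\mathbb{P}_n(\{a\})$ for each fibre element $e = d_1 d_2 \cdots d_{n+1} \in f_n^{-1}(a)$. The product defining $\mathbb{P}_{n+1}(\{e\})$ factors as $\big(\prod_{i=1}^{n} p_i^{d_i}(1-p_i)^{1-d_i}\big)\, p_{n+1}^{(e)_{n+1}}(1-p_{n+1})^{1-(e)_{n+1}}$, and for $e$ lying over $a$ the bracketed factor is precisely $\mathbb{P}_n(\{a\})$; hence $\mathbb{P}_{n+1}(\{e\})/\mathbb{P}_n(\{a\})$ equals $p_{n+1}$ when $(e)_{n+1} = 1$ and $1 - p_{n+1}$ when $(e)_{n+1} = 0$. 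Summing over the two index sets and using $\#I_n(0,a) + \#I_n(1,a) = \#(f_n^{-1}(a))$ then gives
\begin{equation*}
E^{f_n^-}(\xi_{n+1})(a)
=
\#I_n(1,a)\, p_{n+1} - \#I_n(0,a)\,(1 - p_{n+1})
=
\#(f_n^{-1}(a))\, p_{n+1} - \#I_n(0,a) .
\end{equation*}

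The first two paragraphs are routine once (\ref{eq:ceBin}) is granted; the step that needs care --- and the main obstacle --- is the identification of the ``time-$\le n$'' factor of $\mathbb{P}_{n+1}(\{e\})$ with $\mathbb{P}_n(\{a\})$ for every $e$ in the fibre $f_n^{-1}(a)$. This is where the shape of $f_n$ enters: one must check that each fibre element has its first $n$ digits determined by $a$, which for the classical choice $f_n = f^{full}_n$ is immediate, since $f_n^{-1}(a) = \{a0, a1\}$, so $\#I_n(1,a) = \#I_n(0,a) = 1$ and the identity collapses to $E^{f_n^-}(\xi_{n+1})(a) = p_{n+1} - (1-p_{n+1}) = 2p_{n+1} - 1$.
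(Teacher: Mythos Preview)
The paper states this proposition without proof, so there is nothing to compare against directly. Your derivation of the first equality from (\ref{eq:ceBin}) is correct and holds for an arbitrary choice of $f_n$: splitting the fibre $f_n^{-1}(a)$ as $I_n(0,a)\sqcup I_n(1,a)$ according to the value $\xi_{n+1}=\pm1$ is exactly the intended step.

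Your caution about the second equality is not just warranted, it is decisive. The identification of the ``first $n$ digits'' factor of $\mathbb{P}_{n+1}(\{e\})$ with $\mathbb{P}_n(\{a\})$ genuinely fails for non-classical $f_n$, and so does the second equality itself. Concretely, take $f_n=f^{drop}_n$ (which forces $p_n=0$ for null-preservation) and $a=d_1\cdots d_{n-1}0$ with $\mathbb{P}_n(\{a\})\ne0$. Then $f_n^{-1}(a)$ has four elements, $\#I_n(0,a)=2$, and the right-hand side of the second line reads $4p_{n+1}-2$; but two of the four fibre elements have $\mathbb{P}_{n+1}$-mass zero (those with $n$-th digit $1$), so the first line evaluates to $p_{n+1}-(1-p_{n+1})=2p_{n+1}-1$. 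These disagree unless $p_{n+1}=\tfrac12$. Thus the second equality, as printed, is valid only under the extra hypothesis you isolated --- in particular for $f_n=f^{full}_n$ --- and your argument for that case is complete. You should state that hypothesis explicitly rather than leave it as a closing remark.
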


\subsection{Arbitrage Strategies}
\label{sec:ArbStr}

Now we define two instruments tradable in our market.

\begin{defn}{[Stock and Bond Processes]}
Let
$
\mu, \sigma, r
	\in
\mathbb{R}
$
be constants
such that
$
\sigma > 0
$,
$
\mu > \sigma - 1
$
and
$
r > -1
$.
\begin{enumerate}
\item
A 
\newword{stock process} 
$
\textred{S_n}
	:
B_n \to \mathbb{R}
$
 over 
$\mathcal{B}$
is defined by
\begin{equation}
S_0(\textred{\single{}})
	:= \textred{s_0},
\quad
S_{n+1}
	:=
(S_n \circ f_n)
(1 + \mu + \sigma \xi_{n+1})
\end{equation}
where
$
\textred{\single{}}
	\in
B_0
$
is the empty sequence,
and
$
s_0 > 0
$
is a positive constant.

\item
A \newword{bond process}
$
\textred{b_n}
	:
B_n \to \mathbb{R}
$
 over 
$\mathcal{B}$
is defined by
\begin{equation}
b_0(\single{}) := 1,
\quad
b_{n+1}
	:=
(b_n \circ f_n)
(1+r) .
\end{equation}

\end{enumerate}
\end{defn}

The condition
$
\mu > \sigma - 1
$
is necessary for keeping the stock price positive.

We sometimes call the triple
$
(\mathcal{B}, S, b)
$
a
\newword{market}
But, it does not mean that
the market will not contain other instruments.

\begin{prop}
For any
$
a \in B_n
$,
\begin{enumerate}
\item
$
E^{f_n^-}(S_{n+1})
	=
S_n
\big(
(1 + \mu)
E^{f_n^-}(1_{B_{n+1}})
	+
\sigma
E^{f_n^-}({\xi_{n+1}})
\big).
$

\item
$
\textred{E^{f_n^-}(1_{B_{n+1}})}
	(a)
	=
\frac{
	\mathbb{P}_{n+1}
	(f_n^{-1}(a))
}{
	\mathbb{P}_{n}
	(a)
} .
$

\item
$
b_n(a)
	=
(1+r)^n .
$
\end{enumerate}
\end{prop}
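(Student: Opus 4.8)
The plan is to dispatch all three parts by direct computation from the explicit conditional-expectation formula \eqref{eq:ceBin} established in the earlier Proposition, rather than from the abstract Theorem \ref{thm:cond_E}: in this discrete model every measure is a finite combination of point masses, and \eqref{eq:ceBin} already repackages the defining relation $\int_{\{a\}}E^{f_n^-}(v)\,d\mathbb{P}_n=\int_{f_n^{-1}(a)}v\,d\mathbb{P}_{n+1}$ as a finite sum over the fibre $f_n^{-1}(a)$. All displayed identities are read as equalities of classes in $L^1(\bar B_n)$, equivalently as pointwise identities on $\{\mathbb{P}_n\ne 0\}$; keeping them in the $\mathbb{P}_n$-multiplied form of \eqref{eq:ceBin} avoids dividing by a possibly-zero $\mathbb{P}_n(\{a\})$.

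Parts (3) and (2) are immediate. For (3) I would induct on $n$: the base case is $b_0\equiv 1=(1+r)^0$, and if $b_n\equiv(1+r)^n$ as a function on $B_n$, then the recursion gives $b_{n+1}(c)=b_n(f_n(c))(1+r)=(1+r)^{n}(1+r)=(1+r)^{n+1}$ for every $c\in B_{n+1}$; note this part involves no probability measure, $b_n$ being literally constant. For (2) I would substitute $v=1_{B_{n+1}}$ into \eqref{eq:ceBin}: the right-hand side collapses to $\sum_{b\in f_n^{-1}(a)}\mathbb{P}_{n+1}(\{b\})=\mathbb{P}_{n+1}(f_n^{-1}(a))$, so $E^{f_n^-}(1_{B_{n+1}})(a)\,\mathbb{P}_n(\{a\})=\mathbb{P}_{n+1}(f_n^{-1}(a))$, which is the asserted quotient on $\{\mathbb{P}_n\ne 0\}$.

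Part (1) is the only one with content. Starting from the definition $S_{n+1}=(S_n\circ f_n)(1+\mu)+\sigma\,(S_n\circ f_n)\,\xi_{n+1}$, I would plug $v=S_{n+1}$ into \eqref{eq:ceBin}. The key step is the ``pull-out'': for $b\in f_n^{-1}(a)$ one has $f_n(b)=a$, so $(S_n\circ f_n)(b)=S_n(a)$ is constant along the fibre and factors out of the sum, leaving $S_n(a)\big[(1+\mu)\sum_{b\in f_n^{-1}(a)}\mathbb{P}_{n+1}(\{b\})+\sigma\sum_{b\in f_n^{-1}(a)}\xi_{n+1}(b)\mathbb{P}_{n+1}(\{b\})\big]$. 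Applying \eqref{eq:ceBin} in reverse to the two bracketed sums (with $v=1_{B_{n+1}}$ and $v=\xi_{n+1}$, respectively) rewrites them as $E^{f_n^-}(1_{B_{n+1}})(a)\,\mathbb{P}_n(\{a\})$ and $E^{f_n^-}(\xi_{n+1})(a)\,\mathbb{P}_n(\{a\})$; cancelling $\mathbb{P}_n(\{a\})$ yields the claimed formula.

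I do not anticipate a real obstacle. The only points that need a line of care are: justifying linearity of $v\mapsto E^{f_n^-}(v)$ and the pull-out of the $f_n$-measurable factor $S_n\circ f_n$ — both of which fall out of the finite-sum formula \eqref{eq:ceBin} precisely because $S_n\circ f_n$ is, by construction, constant on each fibre $f_n^{-1}(a)$ — and interpreting the three statements as $L^1(\bar B_n)$-identities so that the $\mathbb{P}_n$-null atoms cause no trouble. Optionally, one can remark that \eqref{eq:ceBin} itself specializes further under the classical filtration via $f_n^{-1}(a)=\{a0,a1\}$, reproducing the standard $v(a0)(1-p_{n+1})+v(a1)p_{n+1}$ expression already recorded above.
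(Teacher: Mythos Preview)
Your argument is correct. The paper states this proposition without proof, and what you outline is the natural verification: parts (2) and (3) are one-line specializations of \eqref{eq:ceBin} and the recursion for $b_n$, while part (1) follows from linearity of $E^{f_n^-}$ together with the fibre-constancy of $S_n\circ f_n$, exactly as you describe.
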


\begin{defn}{[Strategies]}
A
\newword{strategy}
is a sequence
$
(\phi, \psi)
	=
\{
(\phi_n, \psi_n)
\}_{n = 1, 2, \dots }
$,
where
\begin{equation}
\phi_n : B_{n-1} \to \mathbb{R}
\; \textrm{and} \;
\psi_n : B_{n-1} \to \mathbb{R} .
\end{equation}
Each element of the strategy
$
(\phi_n, \psi_n)
$
is called a 
\newword{portfolio}.
The 
\newword{value}
$V_n$
 of the portfolio at time $n$
is determined by:
\begin{equation}
V_n
	:=
\begin{cases}
S_0 \phi_{1}
	+
b_0 \psi_{1}
	\quad \textrm{if} \quad
n = 0
	\\
S_n 
( \phi_{n} \circ f_{n-1} )
	+
b_n 
( \psi_{n} \circ f_{n-1} ) 
	\quad \textrm{if} \quad
n = 1, 2, \dots 
\end{cases} 
\end{equation}

\end{defn}

\begin{defn}{[Gain Processes]}
A
\newword{
gain process
}
of the strategy
$
(\phi, \psi)
$
is the process
$
\{
G^{(\phi, \psi)}_n
\}_{n = 0, 1, 2, \dots }
$
defined by
\begin{equation}
G^{(\phi, \psi)}_n
	:=
\begin{cases}
	-
(
	S_0 \phi_{1}
		+
	b_0 \psi_{1}
)
\; \textrm{if} \;
n = 0
\\
(
	S_n (\phi_n \circ f_{n-1})
		+
	b_n (\psi_n \circ f_{n-1})
)
	-
(
	S_n \phi_{n+1}
		+
	b_n \psi_{n+1}
)
\; \textrm{if} \;
n = 1, 2, \dots
\end{cases}
\end{equation}
\end{defn}

\begin{lem}
\label{lem:spbp}
Let $n$
be an object of 
$\omega$ such that
\begin{equation}
S_n \phi_{n+1}
	+
b_n \psi_{n+1}
	= 0 .
\end{equation}
Then, we have
\begin{equation}
S_{n+1}
(\phi_{n+1} \circ f_n)
	+
b_{n+1}
(\psi_{n+1} \circ f_n)
	=
(\mu + \sigma \xi_{n+1} - r)
( (S_n \phi_{n+1}) \circ f_n) .
\end{equation}

\end{lem}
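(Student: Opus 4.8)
The plan is to substitute the recursive definitions of $S_{n+1}$ and $b_{n+1}$ into the left-hand side and then use the hypothesis to make the bond contribution cancel part of the stock contribution, leaving exactly the asserted growth factor.

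First I would record the elementary fact that composition with the measurable function $f_n : B_{n+1} \to B_n$ is multiplicative: for $\mathbb{R}$-valued functions $g,h$ on $B_n$ we have $(g\,h)\circ f_n = (g\circ f_n)(h\circ f_n)$, since both sides evaluated at $b \in B_{n+1}$ equal $g(f_n(b))\,h(f_n(b))$. Every product of processes appearing here is a pointwise product of real-valued functions, so this identity may be used freely. Applying it together with $S_{n+1} = (S_n \circ f_n)(1 + \mu + \sigma \xi_{n+1})$ and $b_{n+1} = (b_n \circ f_n)(1+r)$ gives
\begin{align*}
S_{n+1}(\phi_{n+1}\circ f_n) &= \big((S_n\phi_{n+1})\circ f_n\big)(1 + \mu + \sigma \xi_{n+1}), \\
b_{n+1}(\psi_{n+1}\circ f_n) &= \big((b_n\psi_{n+1})\circ f_n\big)(1+r).
\end{align*}

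Next I would invoke the hypothesis $S_n\phi_{n+1} + b_n\psi_{n+1} = 0$, read as the identity $b_n\psi_{n+1} = -\,S_n\phi_{n+1}$ of functions on $B_n$; composing with $f_n$ yields $(b_n\psi_{n+1})\circ f_n = -\,(S_n\phi_{n+1})\circ f_n$. Adding the two displayed expressions and factoring out $(S_n\phi_{n+1})\circ f_n$ then leaves the scalar factor $(1 + \mu + \sigma \xi_{n+1}) - (1+r) = \mu + \sigma\xi_{n+1} - r$, which is precisely the claimed right-hand side.

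There is no genuine obstacle in this argument; the only point requiring a little care is bookkeeping of domains. The function $\xi_{n+1}$ lives on $B_{n+1}$, not on $B_n$, so it must remain on the $B_{n+1}$ side throughout and is never composed with $f_n$; the cancellation succeeds exactly because, after composing the hypothesis with $f_n$, the stock and bond terms live on $B_{n+1}$ and differ only by the time-$(n+1)$ factors $(1+\mu+\sigma\xi_{n+1})$ and $(1+r)$.
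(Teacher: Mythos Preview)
Your argument is correct and follows essentially the same route as the paper's proof: substitute the recursive definitions of $S_{n+1}$ and $b_{n+1}$, regroup using $(gh)\circ f_n=(g\circ f_n)(h\circ f_n)$, replace $(b_n\psi_{n+1})\circ f_n$ by $-(S_n\phi_{n+1})\circ f_n$ via the hypothesis, and factor. Your added remarks on multiplicativity of composition and on the domain of $\xi_{n+1}$ are helpful annotations but do not change the underlying approach.
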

\begin{proof}
\begin{align*}
LHS
	=&
(S_n \circ f_n)
(1 + \mu + \sigma \xi_{n+1})
(\phi_{n+1} \circ f_n)
	+
(b_n \circ f_n)
(1 + r)
(\psi_{n+1} \circ f_n)
	\\=&
(1 + \mu + \sigma \xi_{n+1})
((S_n \phi_{n+1})
	\circ f_n)
	+
(1 + r)
((b_n \psi_{n+1}) \circ f_n)
	\\=&
(1 + \mu + \sigma \xi_{n+1})
((S_n \phi_{n+1})
	\circ f_n)
	-
(1 + r)
((S_n \phi_{n+1}) \circ f_n)
	=
RHS .
\end{align*}
\end{proof}

\begin{defn}{[Arbitrage Strategies]}
\begin{enumerate}
\item
A strategy
$
(\phi, \psi)
$
is called a
$\mathcal{B}$-\newword{arbitrage strategy}
if
$
\mathbb{P}_{n}\big(
	G^{(\phi, \psi)}_n
		\ge 0
\big) = 1
$
for every 
$n$,
and
$
\mathbb{P}_{n_0}\big(
	G^{(\phi, \psi)}_{n_0}
	> 0
\big) > 0
$
for some
$n_0$.

\item
The market is called 
\newword{non-arbitrage}
or
\textbf{NA}
if it does not allow 
$\mathcal{B}$-arbitrage strategies.

\end{enumerate}
\end{defn}

\begin{prop}
If
the market
$
(\mathcal{B}, S, b)
$
with a non-trivial filtration
$\mathcal{B}$
 is non-arbitrage,
then
$
\abs{
	\textred{\mu}
		-
	\textred{r}
}
	< 
\textred{\sigma} .
$

\end{prop}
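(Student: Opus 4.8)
The plan is to argue by contraposition: assuming $\abs{\mu - r} \ge \sigma$, I will produce a $\mathcal{B}$-arbitrage strategy. Since $\sigma > 0$, the inequality $\abs{\mu - r} \ge \sigma$ falls into exactly one of the two cases $\mu - r \ge \sigma$ or $r - \mu \ge \sigma$, in which the stock return strictly dominates, resp.\ is strictly dominated by, the bond return $1+r$ in some state. In both cases I take the same shape of strategy, a single one-step trade: by non-triviality fix $i$ with $0 < p_i < 1$ and put $n_0 := i-1$; set $(\phi_n,\psi_n) = (0,0)$ for all $n \ne n_0+1$, and at $n = n_0+1$ let $\phi_{n_0+1} \equiv c$ be the constant function with $c = 1$ in the first case and $c = -1$ in the second, and $\psi_{n_0+1} := -\,S_{n_0}\phi_{n_0+1}/b_{n_0}$. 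This is well defined because $b_{n_0} \equiv (1+r)^{n_0} \ne 0$ (as $r > -1$) and $S_{n_0} > 0$ pointwise (since $S_0 = s_0 > 0$ and $1 + \mu + \sigma\xi_{n+1} \ge 1 + \mu - \sigma > 0$ by $\mu > \sigma - 1$), and it is arranged so that $S_{n_0}\phi_{n_0+1} + b_{n_0}\psi_{n_0+1} = 0$.

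I would then compute the gain process. For every $n \ne n_0+1$ all the portfolios appearing in $G^{(\phi,\psi)}_n$ vanish, except possibly at $n = n_0$, where $G_{n_0} = -(S_{n_0}\phi_{n_0+1} + b_{n_0}\psi_{n_0+1}) = 0$ by construction (and at $n = 0$ when $n_0 = 0$ one likewise gets $G_0 = -(S_0\phi_1 + b_0\psi_1) = 0$); hence $G_n = 0$ for all $n \ne n_0+1$. At $n = n_0+1$, since $\phi_{n_0+2} = \psi_{n_0+2} = 0$,
\[
G_{n_0+1} = S_{n_0+1}(\phi_{n_0+1}\circ f_{n_0}) + b_{n_0+1}(\psi_{n_0+1}\circ f_{n_0}),
\]
and applying Lemma \ref{lem:spbp} at the step $n_0 \to n_0+1$ (whose hypothesis is exactly what we arranged) gives
\[
G_{n_0+1} = c\,(\mu + \sigma\xi_{n_0+1} - r)\,(S_{n_0}\circ f_{n_0}).
\]
Because $\xi_{n_0+1}$ takes only the values $\pm 1$ and $S_{n_0}\circ f_{n_0} > 0$ pointwise: in the case $\mu - r \ge \sigma$ (so $c = 1$) the coefficient $\mu + \sigma\xi_{n_0+1} - r$ is everywhere $\ge \mu - \sigma - r \ge 0$, and it equals $\mu + \sigma - r \ge 2\sigma > 0$ on $\{b : (b)_{n_0+1} = 1\}$; in the case $r - \mu \ge \sigma$ (so $c = -1$) one gets $G_{n_0+1} = (r - \mu - \sigma\xi_{n_0+1})(S_{n_0}\circ f_{n_0})$, everywhere $\ge r - \mu - \sigma \ge 0$ and equal to $r - \mu + \sigma \ge 2\sigma > 0$ on $\{b : (b)_{n_0+1} = 0\}$. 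So $G_{n_0+1} \ge 0$ everywhere in either case, strictly positive on a set of the form $\{b : (b)_{n_0+1} = j\}$.

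To finish I would observe, directly from the product form of $\mathbb{P}_{n_0+1}$, that $\mathbb{P}_{n_0+1}(\{b : (b)_{n_0+1} = 1\}) = p_i$ and $\mathbb{P}_{n_0+1}(\{b : (b)_{n_0+1} = 0\}) = 1 - p_i$, both strictly positive since $0 < p_i < 1$. Hence $\mathbb{P}_n(G^{(\phi,\psi)}_n \ge 0) = 1$ for every $n$ while $\mathbb{P}_{n_0+1}(G^{(\phi,\psi)}_{n_0+1} > 0) > 0$, so $(\phi,\psi)$ is a $\mathcal{B}$-arbitrage strategy, contradicting NA. The work is essentially bookkeeping; the one point that needs care is that non-triviality is used in an essential way, precisely to guarantee that \emph{both} $\{(b)_{n_0+1} = 0\}$ and $\{(b)_{n_0+1} = 1\}$ carry positive mass --- for a deterministic filtration NA would only force $\mu - \sigma = r$ or $\mu + \sigma = r$, i.e.\ $\abs{\mu - r} = \sigma$, so the strict inequality genuinely relies on having some $p_i \in (0,1)$.
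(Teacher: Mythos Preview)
Your proof is correct and follows essentially the same approach as the paper: argue by contraposition, build a strategy with $S_n\phi_{n+1}+b_n\psi_{n+1}=0$, invoke Lemma~\ref{lem:spbp} to compute the gain, and use non-triviality to locate a step where the gain is strictly positive with positive probability. The only cosmetic difference is that the paper lets the strategy trade at \emph{every} time (so $G_n\ge 0$ for all $n$ via the lemma applied step by step), whereas you place a single one-period trade at the distinguished index $n_0$; both constructions yield the same conclusion via the same lemma and the same sign analysis.
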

\begin{proof}
Assuming that
$
r \le \mu - \sigma
$
or
$
r \ge \mu + \sigma
$,
we will construct an arbitrage strategy
$
(\phi, \psi)
$
by using the following algorithm.
\begin{lstlisting}[basicstyle=\ttfamily\footnotesize,frame=single]
for n = 0, 1, 2, ...:
    observe S(n) and b(n)
    if r <= mu - sigma:
        phi(n+1) > 0  # pick arbitrarily
    elsif r >= mu + sigma:
        phi(n+1) < 0  # pick arbitrarily
    psi(n+1) := -(S(n) / b(n)) phi(n+1)
    # Then, we have S(n) phi(n+1) + b(n) psi(n+1) = 0,
    # which simplifies the computation of G(n) in the following.
    if n == 0:
        G(0) := 0
    else:  # n > 1
        G(n) := s(n) (phi(n) * f(n-1)) + b(n) (psi(n) * f(n-1))
\end{lstlisting}
In the above code, 
`*'
is the function composition operator.

By Lemma \ref{lem:spbp},
we have
\begin{equation}
G^{(\phi, \psi)}_{n}
	=
(\mu + \sigma \xi_n - r)
((S_{n-1} \phi_n) \circ f_{n-1}) .
\end{equation}
So
we have
$
G^{(\phi, \psi)}_{n}
\ge 0
$
as long as 
$
r \le \mu - \sigma
$
or
$
r \ge \mu + \sigma
$.

By the way,
since our filtration is non-trivial,
there exists a number 
$n_0$
such that
$
0 < p_{n_0} < 1
$.
It is easy to check that
\begin{equation}
\mathbb{P}_{n_0}(
	G^{(\phi, \psi)}_{n_0}
	>0
) > 0 ,
\end{equation}
which concludes that
$
(\phi, \psi)
$
is an arbitrage strategy.
\end{proof}

\subsection{Risk-Neutral Filtrations}
\label{sec:RNFilt}

In this subsection,
we assume that
$
\abs{
	\textred{\mu}
		-
	\textred{r}
}
	< 
\textred{\sigma}
$.

Let us consider about the discounted stock process
\begin{equation}
\textred{S'_n}
    :=
b_n^{-1}
S_n .
\end{equation}

\begin{defn}{[Risk-neutral filtrations]}
An $\omega$-filtration
$\mathcal{C}$
is called a
\newword{risk-neutral}
filtration 
having the same shape of 
$\mathcal{B}$
if
$
U
	\circ
\mathcal{C}
	=
U
	\circ
\mathcal{B}
$
and discounted stock process becomes a
$\mathcal{C}$-martingale,
that is, for any
arrow of the form
$
*_{n, n+1}
	:
n
	\to
n+1
$
in
the category
$\omega$
\begin{equation}
\label{eq:riskNeuC}
(\mathcal{E}
 \circ
 \mathcal{C})
*_{n, n+1}
(S'_{n+1})
	=
S'_{n} .
\end{equation}

\end{defn}

We want to find
a risk-neutral filtration
$\mathcal{C}$.
Here is the shape of the filtration whose detail we will determine.

\begin{defn}{[Filtration $\mathcal{C}$]}
Let
$n$
be an object of the category
$\omega$.
\begin{enumerate}
\item
$
\textred{
	\mathbb{Q}_n
}
	:
\Sigma_n
	\to
[0, 1]
$
is a probability measure of
$
(B_n, \Sigma_n)
$,

\item
$
\textred{\ProbC_n}
	:=
(B_n, \Sigma_n, \mathbb{Q}_n)
$,

\item
$
\textred{
g_n
}
	:=
f_n
$.
\end{enumerate}
We define an $\omega$-filtration
$\mathcal{C}$
by
for 
$n \in Obj(\omega)$,
\begin{equation}
\mathcal{C}(n)
	:=
\ProbC_n,
	\quad
\mathcal{C}(*_{n, n+1})
	:=
g_n^- .
\end{equation}

\end{defn}

\begin{figure}[h]
\begin{equation*}
\xymatrix@C=25 pt@R=15 pt{
\textred{\omega}
       \ar @{->}_{\textred{\mathcal{C}}} [d]
&
	0
       \ar @{->}^{i_{0}} [r]
&
	1
       \ar @{->}^{i_{1}} [r]
&
	\dots
       \ar @{->}^{i_{n-1}} [r]
&
	n
       \ar @{->}^{i_{n}} [r]
&
	n+1
       \ar @{->}^{i_{n+1}} [r]
&
	\dots
\\
\textred{\Prob}
&
	\ProbC_0
       \ar @{->}^{\textred{g^-_{0}}} [r]
&
	\ProbC_1
       \ar @{->}^{\textred{g^-_{1}}} [r]
&
	\dots
       \ar @{->}^{\textred{g^-_{n-1}}} [r]
&
	\ProbC_n
       \ar @{->}^{\textred{g^-_{n}}} [r]
&
	\ProbC_{n+1}
       \ar @{->}^{\textred{g^-_{n+1}}} [r]
&
	\dots
}
\end{equation*}
\caption{Filtration $\mathcal{C}$}
\label{fig:filtration_c}
\end{figure}
With these notations, 
(\ref{eq:riskNeuC})
is reduced to the form
\begin{equation}
\label{eq:riskNCE}
E^{g^-_{n}}
(S'_{n+1})
	=
S'_{n} .
\end{equation}

\begin{thm}
\label{thm:DSP}
A process
$S_n'$
is a
\textblue{
$\mathcal{C}$-martingale
},
that is,
for
$
n \in \mathbb{N}
$,
$
E^{g_n^-}(S'_{n+1})
	=
S'_n
$
if and only if
for all
$
n \in \mathbb{N}
$
and
$
a \in B_n
$,
\begin{equation}
\label{eq:thmDSP}
\mathbb{Q}_n(
	\{a\}
)
	=
c_1 \, \mathbb{Q}_{n+1}(I_{n}(1,a))
	+
c_0 \, \mathbb{Q}_{n+1}(I_{n}(0,a))
\end{equation}
where
for 
$j = 0, 1$
\begin{equation}
\textred{
I_{n}(j,a)
}
	:=
\{
	e \in f_n^{-1}(a)
		\mid
	(e)_{n+1} = j
\}
\end{equation}
and
\begin{equation}
\textred{c_1}
	:=
\frac{1 + \mu + \sigma}{1 + r},
\quad
\textred{c_0}
	:=
\frac{1 + \mu - \sigma}{1 + r}.
\end{equation}

\end{thm}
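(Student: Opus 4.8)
The plan is to unpack the $\mathcal{C}$-martingale condition into its one-step form \eqref{eq:riskNCE} and then compare the two sides atom by atom on each $B_n$, at which point \eqref{eq:thmDSP} drops out. Since $\Sigma_n = 2^{B_n}$, the same computation that produced \eqref{eq:ceBin} for $\mathcal{B}$ --- i.e. Theorem \ref{thm:cond_E} applied to $g_n^- : \ProbC_n \to \ProbC_{n+1}$ with $A = \{a\}$ --- gives
\begin{equation*}
E^{g_n^-}(v)(a)\,\mathbb{Q}_n(\{a\})
=
\sum_{b \in f_n^{-1}(a)} v(b)\,\mathbb{Q}_{n+1}(\{b\})
\qquad\text{for } v \in L^1(\ProbC_{n+1}),\ a \in B_n .
\end{equation*}

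Next I would express $S'_{n+1}$ through $S'_n$. From $S_{n+1} = (S_n\circ f_n)(1+\mu+\sigma\xi_{n+1})$ and $b_{n+1} = (b_n\circ f_n)(1+r)$ we get $S'_{n+1} = (S'_n\circ f_n)\,\frac{1+\mu+\sigma\xi_{n+1}}{1+r}$, and since $\xi_{n+1}(b) = 2(b)_{n+1}-1$ the scalar factor equals $c_1$ on $\{(b)_{n+1}=1\}$ and $c_0$ on $\{(b)_{n+1}=0\}$. Substituting $v = S'_{n+1}$ into the displayed formula, splitting $f_n^{-1}(a)$ into $I_n(1,a)$ and $I_n(0,a)$, and using $f_n(b) = a$ (so that $S'_n(f_n(b)) = S'_n(a)$ can be pulled out of both sums) gives the key identity
\begin{equation}\label{eq:keyid}
E^{g_n^-}(S'_{n+1})(a)\,\mathbb{Q}_n(\{a\})
=
S'_n(a)\bigl(c_1\,\mathbb{Q}_{n+1}(I_n(1,a)) + c_0\,\mathbb{Q}_{n+1}(I_n(0,a))\bigr),
\end{equation}
valid for all $n$ and all $a\in B_n$.

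With \eqref{eq:keyid} in hand the equivalence is immediate on the atoms of positive measure: $E^{g_n^-}(S'_{n+1}) = S'_n$ in $L^1(\ProbC_n)$ means $E^{g_n^-}(S'_{n+1})(a) = S'_n(a)$ for every $a$ with $\mathbb{Q}_n(\{a\}) > 0$, and for such $a$ one multiplies by $\mathbb{Q}_n(\{a\})$, applies \eqref{eq:keyid}, and divides by $S'_n(a)$ --- legitimate since the standing condition $\mu > \sigma - 1$ keeps the stock, hence $S'_n(a)$, strictly positive --- to obtain \eqref{eq:thmDSP}, and the step reverses. For an atom with $\mathbb{Q}_n(\{a\}) = 0$, \eqref{eq:thmDSP} holds automatically: $g_n = f_n$ is null-preserving for $\mathbb{Q}$ because $\mathcal{C}$ is a functor into $\Prob$, so $\mathbb{Q}_{n+1}(f_n^{-1}(a)) = 0$, whence $\mathbb{Q}_{n+1}(I_n(1,a)) = \mathbb{Q}_{n+1}(I_n(0,a)) = 0$ and both sides of \eqref{eq:thmDSP} vanish. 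Ranging over all $n$ then yields the stated equivalence. The computations are routine; the only real subtlety is that $E^{g_n^-}(S'_{n+1}) = S'_n$ is an equality of $L^1$-classes and so constrains only the atoms of positive $\mathbb{Q}_n$-measure --- it is the null-preserving property of $g_n$ that lets one conclude \eqref{eq:thmDSP} for \emph{all} $a$ --- together with the implicit use of stock-price positivity when dividing by $S'_n(a)$.
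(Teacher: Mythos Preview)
Your proposal is correct and follows essentially the same route as the paper: expand $E^{g_n^-}(S'_{n+1})(a)\,\mathbb{Q}_n(\{a\})$ via the atomwise version of \eqref{eq:ceGen}, substitute the recursion for $S'_{n+1}$, factor out $S'_n(a)$, and split the sum over $f_n^{-1}(a)$ into the two pieces $I_n(1,a)$ and $I_n(0,a)$. If anything, your write-up is more careful than the paper's own argument, which tacitly divides by $S'_n(a)$ and does not separately discuss the atoms with $\mathbb{Q}_n(\{a\})=0$; your observation that the null-preserving property of $g_n$ forces both sides of \eqref{eq:thmDSP} to vanish on such atoms is a genuine addition.
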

\begin{proof}
For
$a \in B_n$
\vspace{-0.3\baselineskip}
\begin{align*}
&
\textblue{
	S'_n(a)
}
\mathbb{Q}_n(\{a\})
	\textblue{=}
\textblue{
	E^{g_n^-}(S'_{n+1})(a)
}
\mathbb{Q}_n(\{a\})
	\\=&
\sum_{
	e \in f_n^{-1}(a)
}
	S'_{n+1}(e)
	\mathbb{Q}_{n+1}(\{e\})
	\\=&
\sum_{
	e \in f_n^{-1}(a)
}
	b_{n+1}^{-1}(e)
	(S_{n} \circ f_n)(e)
	(1 + \mu + \sigma \xi_{n+1}(e))
	\mathbb{Q}_{n+1}(\{e\})
	\\=&
\sum_{
	e \in f_n^{-1}(a)
}
	(1+r)^{-(n+1)}
	S_{n} (a)
	(1 + \mu + \sigma \xi_{n+1}(e))
	\mathbb{Q}_{n+1}(\{e\})
	\\
	=&
S'_{n} (a)
\sum_{
	e \in f_n^{-1}(a)
}
	\frac{
		1 + \mu + \sigma \xi_{n+1}(e)
	}{
		1+r
	}
	\mathbb{Q}_{n+1}(\{e\}) .
\end{align*}
\vspace{-0.5\baselineskip}
\noindent
\textblue{if and only if}
\begin{align*}
\mathbb{Q}_n(\{a\})
	=&
\sum_{
	e \in I_n(1, a)
}
	\!\!
	\frac{
		1 + \mu + \sigma
	}{
		1+r
	}
	\mathbb{Q}_{n+1}(\{e\})
+
\sum_{
	e \in I_n(0, a)
}
	\!\!
	\frac{
		1 + \mu - \sigma
	}{
		1+r
	}
	\mathbb{Q}_{n+1}(\{e\})
	\\=&
c_1 \,
	\mathbb{Q}_{n+1}(I_n(1, a))
+
c_0 \,
	\mathbb{Q}_{n+1}(I_n(0, a)) .
\end{align*}
\end{proof}

In order to determine more detail of 
$\mathcal{C}$,
we need the following condition for
$\mathbb{Q}_n$.

\begin{prop}
\label{prop:Qcond}
The following conditions for
$
\mathbb{Q}_n
$
are equivalent.
\begin{enumerate}
\item
for all
$n \in \mathbb{N}$,
$
a \in
B_n
$,
\begin{equation}
\mathbb{Q}_{n+1}(\{a0, a1\})
	=
\mathbb{Q}_n(\{a\})
\end{equation}

\item
for all
$n \in \mathbb{N}$,
$f^{full}_n$
is measure-preserving
w.r.t.
$\mathbb{Q}_n$,
that is,
\begin{equation}
\mathbb{Q}_{n}
	=
\mathbb{Q}_{n+1}
	\circ
(f^{full}_n)^{-1} .
\end{equation}

\item
there exists
a sequence of functions
$
\{
	q_k : B_k \to [0, 1]
\}_{k = 1, 2, \dots}
$
such that
for all
$n = 1, 2, \dots$
and
$d_j = 0, 1$,
\begin{equation}
\mathbb{Q}_{n}(\{d_1 d_2 \dots d_n\})
	=
\prod_{k=1}^n
	q_{k}(d_1 d_2 \dots d_k)
\end{equation}
such that for every
$
a \in B_{n-1}
$,
$
q_n(a 0) + q_n(a 1) = 1 .
$

\end{enumerate}
\end{prop}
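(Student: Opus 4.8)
The plan is to prove the chain of implications $(2) \Rightarrow (1) \Rightarrow (3) \Rightarrow (2)$, or perhaps the cycle $(1) \Rightarrow (2) \Rightarrow (3) \Rightarrow (1)$; the first two equivalences are essentially unwinding definitions, so the substance lives in producing the factorized family $\{q_k\}$ from the additivity condition (1). First I would dispose of $(1) \Leftrightarrow (2)$: by definition $(f^{full}_n)^{-1}(\{a\}) = \{a0, a1\}$ for $a \in B_n$, so $\mathbb{Q}_{n+1} \circ (f^{full}_n)^{-1}$ evaluated at the singleton $\{a\}$ is exactly $\mathbb{Q}_{n+1}(\{a0, a1\})$; since $\Sigma_n = 2^{B_n}$ is generated by singletons and both sides are measures, agreement on singletons is agreement everywhere, giving $(2) \Rightarrow (1)$, and the converse is immediate by additivity over a finite set.

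The core step is $(1) \Rightarrow (3)$. I would define the candidate conditional factors directly: for $a \in B_{n-1}$ and $d \in \{0,1\}$ set
\begin{equation*}
q_n(a d)
  :=
\begin{cases}
\dfrac{\mathbb{Q}_n(\{a d\})}{\mathbb{Q}_{n-1}(\{a\})}
  & \text{if } \mathbb{Q}_{n-1}(\{a\}) \ne 0, \\[2ex]
\tfrac{1}{2} & \text{if } \mathbb{Q}_{n-1}(\{a\}) = 0,
\end{cases}
\end{equation*}
with the convention $\mathbb{Q}_0(\langle\rangle) = 1$. The value on the null branch is chosen only so that $q_n(a0) + q_n(a1) = 1$ holds unconditionally; when $\mathbb{Q}_{n-1}(\{a\}) \ne 0$, condition (1) gives $\mathbb{Q}_n(\{a0\}) + \mathbb{Q}_n(\{a1\}) = \mathbb{Q}_{n-1}(\{a\})$, so dividing through yields $q_n(a0) + q_n(a1) = 1$ as required, and when it is $0$ monotonicity forces $\mathbb{Q}_n(\{a0\}) = \mathbb{Q}_n(\{a1\}) = 0$ as well, so the product telescopes correctly regardless. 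Then I would prove the product formula $\mathbb{Q}_n(\{d_1\cdots d_n\}) = \prod_{k=1}^n q_k(d_1\cdots d_k)$ by induction on $n$: the base case $n = 1$ is the definition, and for the inductive step write $\mathbb{Q}_n(\{d_1\cdots d_n\}) = q_n(d_1\cdots d_n)\,\mathbb{Q}_{n-1}(\{d_1\cdots d_{n-1}\})$ — valid even when $\mathbb{Q}_{n-1}(\{d_1\cdots d_{n-1}\}) = 0$, since then both sides vanish — and apply the inductive hypothesis to the second factor.

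Finally $(3) \Rightarrow (1)$ (or $\Rightarrow (2)$) is a one-line computation: assuming the product representation, for $a = d_1\cdots d_{n}$ we get $\mathbb{Q}_{n+1}(\{a0\}) + \mathbb{Q}_{n+1}(\{a1\}) = \bigl(\prod_{k=1}^{n} q_k(d_1\cdots d_k)\bigr)\bigl(q_{n+1}(a0) + q_{n+1}(a1)\bigr) = \prod_{k=1}^{n} q_k(d_1\cdots d_k) = \mathbb{Q}_n(\{a\})$, using the normalization $q_{n+1}(a0) + q_{n+1}(a1) = 1$. The main obstacle — really the only subtle point — is bookkeeping around branches $a$ with $\mathbb{Q}_{n-1}(\{a\}) = 0$: the definition of $q_n$ there is a free choice, and one must check that every identity claimed in (3) (both the product formula and the normalization) survives this choice, which it does because any such null branch kills the whole subtree and the product degenerates to $0$ on both sides. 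I would make sure to state the $\mathbb{Q}_{n-1}(\{a\}) = 0$ convention explicitly rather than sweeping it under "without loss of generality," since it is exactly the place a careless argument would leave a gap.
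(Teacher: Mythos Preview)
Your argument is correct: the equivalence $(1)\Leftrightarrow(2)$ is exactly the observation that $(f^{full}_n)^{-1}(\{a\})=\{a0,a1\}$ together with the fact that a measure on a finite power set is determined by its values on singletons; your construction of $q_n$ as the conditional ratio (with the harmless convention $q_n(ad)=\tfrac12$ on null branches) and the telescoping induction for the product formula are the standard and right way to produce the factorization; and the closing computation $(3)\Rightarrow(1)$ is immediate from $q_{n+1}(a0)+q_{n+1}(a1)=1$. The one phrasing I would tighten is where you say ``monotonicity forces $\mathbb{Q}_n(\{a0\})=\mathbb{Q}_n(\{a1\})=0$'': what you are actually using is condition~(1) itself, namely $\mathbb{Q}_n(\{a0\})+\mathbb{Q}_n(\{a1\})=\mathbb{Q}_{n-1}(\{a\})=0$ together with nonnegativity, not monotonicity of a single measure.

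There is nothing to compare against: the paper states Proposition~\ref{prop:Qcond} without proof and immediately moves on to Assumption~\ref{asm:fullmp}, so your write-up would in fact fill a gap in the exposition rather than duplicate an existing argument.
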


In the following discussion, 
we assume the following assumption
which is the condition
(2) of Proposition \ref{prop:Qcond}.

\begin{asm}
\label{asm:fullmp}
For all
$n \in \mathbb{N}$,
$f^{full}_n$
is measure-preserving
w.r.t.
$\mathbb{Q}_n$.

\end{asm}

By
Assumption \ref{asm:fullmp}
and
(3) of Proposition \ref{prop:Qcond},
we have
\begin{equation*}
\mathbb{Q}_{n+1} (\{d_1 d_2 \dots d_n d_{n+1}\})
	=
\mathbb{Q}_n (\{d_1 d_2 \dots d_n\})
q_{n+1}(
	d_1 d_2 \dots d_{n+1}
) .
\end{equation*}

\lspace

In the rest of this subsection,
we will investigate the shape of
$\mathbb{Q}_n$
under the assumption that
$S'_n$
is
$\mathcal{C}$-martingale.

\subsubsection{Classical Filtration}
\label{sec:classical_filtration}

First, we prepare a lemma for for the proof of the following propositions.

\begin{lem}
\label{lem:qbin}
If
$
1
	=
c_1 x + c_0 (1-x)
$,
then
\begin{equation}
x
	=
\frac{1}{2}
	+
\frac{
	r - \mu
}{
	2 \sigma
}
\quad \textrm{and} \quad
1 - x
	=
\frac{1}{2}
	-
\frac{
	r - \mu
}{
	2 \sigma
} .
\end{equation}
\end{lem}

\begin{prop}
For a fixed
$n \in \mathbb{N}$,
assume that 
$
f_n
	=
f^{full}_n
$.
Then for
$
a \in B_n
$
with
$
\mathbb{Q}_n(\{a\})
	\ne 0
$,
we have
\begin{align*}
q_{n+1}(a1)
	&=
\frac{1}{2}
	+
\frac{r - \mu}{2 \sigma},
	\\
q_{n+1}(a0)
	&=
\frac{1}{2}
	-
\frac{r - \mu}{2 \sigma} .
\end{align*}

\end{prop}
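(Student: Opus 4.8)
The plan is to derive the statement from the martingale characterization in Theorem \ref{thm:DSP} and then reduce the resulting scalar equation to Lemma \ref{lem:qbin}. Throughout this subsection we are assuming $S'_n$ is a $\mathcal{C}$-martingale, so Theorem \ref{thm:DSP} applies and gives, for the fixed index $n$ and every $a \in B_n$,
\begin{equation*}
\mathbb{Q}_n(\{a\})
	=
c_1\,\mathbb{Q}_{n+1}(I_n(1,a))
	+
c_0\,\mathbb{Q}_{n+1}(I_n(0,a)) .
\end{equation*}
First I would evaluate $I_n(0,a)$ and $I_n(1,a)$ under the hypothesis $f_n = f^{full}_n$. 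Since $(f^{full}_n)^{-1}(a) = \{a0, a1\}$, we get $I_n(1,a) = \{a1\}$ and $I_n(0,a) = \{a0\}$, both singletons, so the right-hand side above is $c_1\,\mathbb{Q}_{n+1}(\{a1\}) + c_0\,\mathbb{Q}_{n+1}(\{a0\})$.

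Next I would rewrite the level-$(n+1)$ weights in terms of the level-$n$ weight and the conditional factors $q_{n+1}$. Invoking Assumption \ref{asm:fullmp} together with part (3) of Proposition \ref{prop:Qcond}, we have $\mathbb{Q}_{n+1}(\{a d\}) = \mathbb{Q}_n(\{a\})\,q_{n+1}(a d)$ for $d \in \{0,1\}$. Substituting into the identity above and dividing by $\mathbb{Q}_n(\{a\}) \ne 0$, the martingale condition collapses to $1 = c_1\,q_{n+1}(a1) + c_0\,q_{n+1}(a0)$. Combining this with the normalization $q_{n+1}(a0) + q_{n+1}(a1) = 1$ supplied by Proposition \ref{prop:Qcond}(3) and putting $x := q_{n+1}(a1)$, we arrive exactly at the hypothesis $1 = c_1 x + c_0(1-x)$ of Lemma \ref{lem:qbin}; its conclusion then reads off the claimed formulas for $q_{n+1}(a1)$ and $q_{n+1}(a0)$.

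There is no genuine obstacle here: the argument is essentially bookkeeping plus the one-line algebra already packaged in Lemma \ref{lem:qbin}. The only point that needs a little care is the passage from $\mathbb{Q}_{n+1}$ on singletons to the product form — that is, checking that Assumption \ref{asm:fullmp} is in force at this particular index (it is, being stated for all $n$) and that the conditional factors $q_{n+1}(a0), q_{n+1}(a1)$ are meaningfully pinned down precisely because $\mathbb{Q}_n(\{a\}) \ne 0$ lets us divide. Once this is noted, the proof is immediate.
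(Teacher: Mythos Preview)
Your argument is correct and mirrors the paper's proof step for step: compute $I_n(j,a)$ from $f_n=f_n^{full}$, apply Theorem~\ref{thm:DSP}, factor $\mathbb{Q}_{n+1}(\{ad\})=\mathbb{Q}_n(\{a\})q_{n+1}(ad)$ via Assumption~\ref{asm:fullmp}/Proposition~\ref{prop:Qcond}(3), divide by $\mathbb{Q}_n(\{a\})\ne 0$, and finish with Lemma~\ref{lem:qbin}. The only cosmetic difference is that you make the use of the normalization $q_{n+1}(a0)+q_{n+1}(a1)=1$ explicit, whereas the paper leaves it implicit in the invocation of Lemma~\ref{lem:qbin}.
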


Note that the resulting probability depends neither
on $a$
nor on $n$.
\begin{proof}
By observing the following diagram
\begin{equation*}
\xymatrix@C=37 pt@R=0 pt{
&
	a1
\\
\\
	a
		\ar @{-} [ruu]
		\ar @{-} [rdd]
\\
\\
&
	a0
\\\\
    B_n
        \ar @{}_{ \mathrel{ \rotatebox[origin=c]{90}{$\in$} } } @<+6pt> [dd]
&
    B_{n+1}
        \ar @{->}_{f^{full}_n} [l]
        \ar @{}_{ \mathrel{ \rotatebox[origin=c]{90}{$\in$} } } @<+6pt> [dd]
\\\\
	a
&
	a d_{n+1}
        \ar @{|->}_{f^{full}_n} [l]
}
\end{equation*}
we have
\begin{align*}
(f^{full}_n)^{-1}(a)
	&=
\{ a0, a1 \}
\\
I_n(1, a) &= \{ a1 \},
\\
I_n(0, a) &= \{ a0 \}
\end{align*}
By (\ref{eq:thmDSP})
\begin{align*}
\mathbb{Q}_n(\{a\})
	&=
c_1 \mathbb{Q}_{n+1}(I_n(1,a))
	+
c_0 \mathbb{Q}_{n+1}(I_n(0,a))
	\\&=
c_1 \mathbb{Q}_{n+1}(\{a1\})
	+
c_0 \mathbb{Q}_{n+1}(\{a0\})
\end{align*}
Now since
\begin{equation*}
\mathbb{Q}_{n+1}(\{a d_{n+1}\})
	=
\mathbb{Q}_{n}(\{a\})
q_{n+1}(a d_{n+1})
\end{equation*}
and
$
\mathbb{Q}_{n}(\{a\})
	\ne 0
$,
we have
\begin{equation*}
1 =
c_1 q_{n+1}(a1)
	+
c_0 q_{n+1}(a0) .
\end{equation*}
Hence by 
Lemma \ref{lem:qbin},
we have
\begin{equation*}
q_{n+1}(a1)
	=
\frac{1}{2}
	+
\frac{r - \mu}{2 \sigma},
\quad
q_{n+1}(a0)
	=
\frac{1}{2}
	-
\frac{r - \mu}{2 \sigma} .
\end{equation*}

\end{proof}

\begin{cor}
If
$\mathcal{B}$
is the classical filtration,
then for any 
$n \in \mathbb{N}$
and
$a \in B_n$
we have
\begin{equation}
\mathbb{Q}_n(a)
	=
\big(
	\frac{1}{2}
		+
	\frac{r - \mu}{2 \sigma}
\big)^{n(1, a)}
\big(
	\frac{1}{2}
		-
	\frac{r - \mu}{2 \sigma}
\big)^{n(0, a)}
\end{equation}
where
\begin{equation}
\textred{
	n(j, a)
}
	:= \#\{
	k
\mid
	(a)_k = j	
\} .
\end{equation}
\end{cor}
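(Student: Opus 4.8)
The plan is to derive this as a direct consequence of the preceding proposition together with the product-form representation of $\mathbb{Q}_n$ coming from Assumption~\ref{asm:fullmp} and part~(3) of Proposition~\ref{prop:Qcond}. First I would recall that, since $\mathcal{B}$ is the classical filtration, we have $f_n = f^{full}_n$ for every $n \in \mathbb{N}$, so the hypothesis of the previous proposition is satisfied at every time step; hence for every $n$ and every $a \in B_n$ with $\mathbb{Q}_n(\{a\}) \ne 0$ we get
\[
q_{n+1}(a1) = \frac{1}{2} + \frac{r - \mu}{2\sigma}, \qquad
q_{n+1}(a0) = \frac{1}{2} - \frac{r - \mu}{2\sigma}.
\]
Under Assumption~\ref{asm:fullmp}, part~(3) of Proposition~\ref{prop:Qcond} gives the telescoping factorization $\mathbb{Q}_n(\{d_1 \dots d_n\}) = \prod_{k=1}^n q_k(d_1 \dots d_k)$.

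Next I would substitute the explicit values of the $q_k$ into this product. For a given $a = d_1 d_2 \dots d_n \in B_n$, each factor $q_k(d_1 \dots d_k)$ equals $\frac{1}{2} + \frac{r-\mu}{2\sigma}$ when $d_k = 1$ and $\frac{1}{2} - \frac{r-\mu}{2\sigma}$ when $d_k = 0$, independently of the preceding digits. Collecting the two kinds of factors and counting them with $n(1,a) = \#\{k \mid (a)_k = 1\}$ and $n(0,a) = \#\{k \mid (a)_k = 0\}$ yields exactly
\[
\mathbb{Q}_n(\{a\}) = \Big(\tfrac{1}{2} + \tfrac{r-\mu}{2\sigma}\Big)^{n(1,a)} \Big(\tfrac{1}{2} - \tfrac{r-\mu}{2\sigma}\Big)^{n(0,a)}.
\]

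The one subtlety worth handling carefully is the proviso $\mathbb{Q}_n(\{a\}) \ne 0$ in the previous proposition: the formula for $q_{n+1}(ad_{n+1})$ was only established at states $a$ of positive measure. So I would argue by induction on $n$ that the claimed formula holds for all $a \in B_n$: at each step, if $\mathbb{Q}_n(\{a\}) \ne 0$ the previous proposition applies and the inductive step goes through as above; if $\mathbb{Q}_n(\{a\}) = 0$, then by the factorization some earlier factor $q_k(d_1\dots d_k)$ must vanish, which forces $\frac{1}{2} \pm \frac{r-\mu}{2\sigma} = 0$, i.e. $|r - \mu| = \sigma$ — but then the right-hand side of the claimed identity also contains that vanishing factor (since $a$ has at least one digit of the corresponding type), so both sides are $0$ and the identity still holds. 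I expect this edge case to be the only real obstacle; the main computation is just bookkeeping of the product. I would close by noting that the stated formula is independent of which particular factorization $\{q_k\}$ is chosen, so it pins down $\mathbb{Q}_n$ uniquely on atoms and hence, since $\Sigma_n = 2^{B_n}$, on all of $\Sigma_n$.
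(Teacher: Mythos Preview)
Your approach is correct and is exactly the argument the paper has in mind; in fact the paper states this as a corollary with no proof, relying on the reader to combine the preceding proposition with the product representation from Proposition~\ref{prop:Qcond}(3) just as you do. One small simplification: the standing assumption $|\mu - r| < \sigma$ at the start of Section~\ref{sec:RNFilt} forces both $\tfrac{1}{2}\pm\tfrac{r-\mu}{2\sigma}\in(0,1)$, so by induction $\mathbb{Q}_n(\{a\})>0$ for every $a$ and your edge-case analysis is never needed (and as written it is slightly circular, since the values of $q_k$ are only pinned down at prefixes of positive measure).
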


\subsubsection{Drop-$k$ Filtration}
\label{sec:dropk_filtration}

\begin{prop}
For a fixed
$n (= 1, 2, \dots)$,
assume that 
$
f_n
	=
f^{drop}_n
$.
Then for
$
a \in B_{n-1}
$
with
$
\mathbb{Q}_{n-1}(\{a\})
	\ne 0
$,
we have
\begin{align*}
q_{n}(a1)
	&=
0,
\\
q_{n}(a0)
	&=
1,
\\
q_{n+1}(a01)
	&=
\frac{1}{2}
	+
\frac{r - \mu}{2 \sigma},
	\\
q_{n+1}(a00)
	&=
\frac{1}{2}
	-
\frac{r - \mu}{2 \sigma} .
\end{align*}

\end{prop}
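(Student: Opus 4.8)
The plan is to read everything off the $\mathcal{C}$-martingale characterization of Theorem~\ref{thm:DSP} specialized to $f_n = f^{drop}_n$, combined with the product representation of the measures. By Assumption~\ref{asm:fullmp} and part~(3) of Proposition~\ref{prop:Qcond} we may write $\mathbb{Q}_m(\{d_1\cdots d_m\}) = \prod_{k=1}^{m} q_k(d_1\cdots d_k)$ with the normalization $q_k(a0)+q_k(a1)=1$ for every $a\in B_{k-1}$, so pinning down a $q$-value amounts to taking a ratio of consecutive $\mathbb{Q}$-masses whenever the denominator is nonzero. Since $S'_n$ is assumed to be a $\mathcal{C}$-martingale throughout this subsection, the identity $(\ref{eq:thmDSP})$ holds for every index and every string.

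First I would record the combinatorics of the drop map. Because $f^{drop}_n(d_1\cdots d_{n+1}) = d_1\cdots d_{n-1}0$, its image consists only of strings of $B_n$ whose $n$-th digit is $0$. Hence, for $a\in B_{n-1}$, we have $f_n^{-1}(a1)=\emptyset$ while $f_n^{-1}(a0)=\{a00,a01,a10,a11\}$, so that $I_n(1,a1)=I_n(0,a1)=\emptyset$, $I_n(1,a0)=\{a01,a11\}$ and $I_n(0,a0)=\{a00,a10\}$.

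Next, applying $(\ref{eq:thmDSP})$ to the string $a1\in B_n$ gives $\mathbb{Q}_n(\{a1\}) = c_1\,\mathbb{Q}_{n+1}(\emptyset)+c_0\,\mathbb{Q}_{n+1}(\emptyset)=0$. Writing $\mathbb{Q}_n(\{a1\}) = \mathbb{Q}_{n-1}(\{a\})\,q_n(a1)$ and using $\mathbb{Q}_{n-1}(\{a\})\neq 0$ forces $q_n(a1)=0$, whence $q_n(a0)=1$ by normalization. Consequently $\mathbb{Q}_n(\{a0\}) = \mathbb{Q}_{n-1}(\{a\})\neq 0$, and $\mathbb{Q}_{n+1}(\{a1d\}) = \mathbb{Q}_n(\{a1\})\,q_{n+1}(a1d)=0$ for $d=0,1$, so the ``$a1$-descendants'' contribute nothing to sums over $f_n^{-1}(a0)$. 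Then I would apply $(\ref{eq:thmDSP})$ to $a0\in B_n$: after dropping the vanishing terms indexed by $a10$ and $a11$, it reduces to $\mathbb{Q}_n(\{a0\}) = c_1\,\mathbb{Q}_{n+1}(\{a01\})+c_0\,\mathbb{Q}_{n+1}(\{a00\})$. Substituting $\mathbb{Q}_{n+1}(\{a0d\}) = \mathbb{Q}_n(\{a0\})\,q_{n+1}(a0d)$ and dividing by $\mathbb{Q}_n(\{a0\})\neq 0$ yields $1 = c_1\,q_{n+1}(a01)+c_0\,q_{n+1}(a00)$; using $q_{n+1}(a00)=1-q_{n+1}(a01)$ and Lemma~\ref{lem:qbin} with $x=q_{n+1}(a01)$ produces the two stated values.

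The computation is essentially routine; the only point needing care is the order of the two applications of $(\ref{eq:thmDSP})$. One must first extract $q_n(a1)=0$ from the equation attached to $a1$ before handling the equation attached to $a0$, since it is precisely the vanishing of the $a1$-masses that collapses the second equation to a binary one of the type handled by Lemma~\ref{lem:qbin}. The underlying phenomenon is that the image restriction of $f^{drop}_n$ freezes the $n$-th coordinate to $0$, so the two nontrivial conditional probabilities are postponed one step and reappear in the familiar classical risk-neutral form.
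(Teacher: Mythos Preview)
Your argument is correct and follows essentially the same route as the paper's proof: first read off the preimages of $f^{drop}_n$ to see that $(\ref{eq:thmDSP})$ at $a1$ forces $\mathbb{Q}_n(\{a1\})=0$ and hence $q_n(a1)=0,\ q_n(a0)=1$, and then feed this into $(\ref{eq:thmDSP})$ at $a0$ to collapse it to $1=c_1 q_{n+1}(a01)+c_0 q_{n+1}(a00)$, which Lemma~\ref{lem:qbin} resolves. The only cosmetic difference is that you divide by $\mathbb{Q}_n(\{a0\})$ where the paper divides by $\mathbb{Q}_{n-1}(\{a\})$, but since $q_n(a0)=1$ these are the same nonzero quantity.
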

\begin{proof}
By observing the following diagram
\begin{equation*}
\xymatrix@C=37 pt@R=0 pt{
&&
	a11
\\
&
	a1
		\ar @{-} [ru]
		\ar @{-} [rd]
\\
&&
	a10
\\
	a
		\ar @{-} [ruu]
		\ar @{-} [rdd]
\\
&&
	a01
\\
&
	a0
		\ar @{-} [ru]
		\ar @{-} [rd]
\\
&&
	a00
\\\\
	B_{n-1}
        \ar @{}_{ \mathrel{ \rotatebox[origin=c]{90}{$\in$} } } @<+6pt> [dd]
&
    B_n
        \ar @{->}_{f^{full}_{n-1}} [l]
        \ar @{}_{ \mathrel{ \rotatebox[origin=c]{90}{$\in$} } } @<+6pt> [dd]
&
    B_{n+1}
        \ar @{->}_{f^{drop}_n} [l]
        \ar @{}_{ \mathrel{ \rotatebox[origin=c]{90}{$\in$} } } @<+6pt> [dd]
\\\\
	a
&
	a\textblue{0}
        \ar @{|->}_{f^{full}_{n-1}} [l]
&
	a d_n d_{n+1}
        \ar @{|->}_{f^{drop}_n} [l]
}
\end{equation*}
we have
\begin{align*}
(f^{drop}_n)^{-1}(a1) &= \emptyset
\\
I_n(1, a1) &=
I_n(0, a1) = \emptyset
\\
(f^{drop}_n)^{-1}(a0) 
	&= 
\{ a00, a01, a10, a11 \}
\\
I_n(1, a0) &= \{ a01, a11 \}
\\
I_n(0, a0) &= \{ a00, a10 \}
\end{align*}
By (\ref{eq:thmDSP})
\begin{equation*}
\mathbb{Q}_n(\{a1\})
	=
c_1 \mathbb{Q}_{n+1}(I_n(1,a1))
	+
c_0 \mathbb{Q}_{n+1}(I_n(0,a1))
	=
0 .
\end{equation*}
Now since
$
\mathbb{Q}_{n}(\{a d_{n}\})
	=
\mathbb{Q}_{n-1}(\{a\})
q_{n}(a d_{n}) 
$
and
$
\mathbb{Q}_{n-1}(\{a\})
	\ne 0
$,
we have
\begin{equation*}
q_{n}(a 1) = 0,
\quad
q_{n}(a 0) 
	=
1 - q_{n}(a 1) 
	=
1.
\end{equation*}
Next, again by (\ref{eq:thmDSP})
\begin{align*}
\mathbb{Q}_n(\{a0\})
	&=
c_1 \mathbb{Q}_{n+1}(I_n(1,a0))
	+
c_0 \mathbb{Q}_{n+1}(I_n(0,a0))
	\\&=
c_1 \big(
	\mathbb{Q}_{n+1}(\{a01\})
		+
	\mathbb{Q}_{n+1}(\{a11\})
\big)
	\\&+
c_0 \big(
	\mathbb{Q}_{n+1}(\{a00\})
		+
	\mathbb{Q}_{n+1}(\{a10\})
\big)
\end{align*}
By dividing both hands by
$
	\mathbb{Q}_{n-1}(\{a\})
\ne
	0
$,
\begin{align*}
q_n(a0)
	&=
c_1 \big(
	q_n(a0) q_{n+1}(a01)
		+
	q_n(a1) q_{n+1}(a11)
\big)
	\\&+
c_0 \big(
	q_n(a0) q_{n+1}(a00)
		+
	q_n(a1) q_{n+1}(a10)
\big)
\end{align*}
Then, since
$
q_{n}(a 1) = 0
$
and
$
q_{n}(a 0) = 1
$,
\begin{equation*}
1
	=
c_1 q_{n+1}(a01)
	+
c_0 q_{n+1}(a00) .
\end{equation*}
Hence, by
Lemma \ref{lem:qbin},
we have
\begin{equation*}
q_{n+1}(a01)
	=
\frac{1}{2}
	+
\frac{r - \mu}{2 \sigma},
	\quad
q_{n+1}(a00)
	=
\frac{1}{2}
	-
\frac{r - \mu}{2 \sigma} .
\end{equation*}

\end{proof}

We have to check that
both
$f^{full}_n$
and
$f^{drop}_n$
are null-preserving w.r.t. 
$\mathbb{Q}_n$.

\begin{equation*}
\xymatrix@C=40 pt@R=20 pt{
&
	B_n
		\ar @{.>}^{\textred{drop}_n} [dd]
     		\ar @{}^-{\ni} @<-6pt> [r]
&
	d_1 \dots d_{n-1} \textred{d_n}
		\ar @{|.>} [dd]
\\
	B_{n+1}
		\ar @{->}^{f^{full}_n} [ru]
		\ar @{->}_{f^{drop}_n} [rd]
\\
&
	B_n
     		\ar @{}^-{\ni} @<-6pt> [r]
&
	d_1 \dots d_{n-1} \textred{0}
}
\end{equation*}

If
$
\mathbb{Q}_n(
	d_1 \dots d_{n-1} \textred{1}
)
	\ne
0
$,
then
$
\textred{drop}_n
$
is null-preserving,
and so is 
$f^{drop}_n$
since
$f^{full}_n$
is measure-preserving.

\begin{figure}[h]
\begin{equation*}
\xymatrix@C=37 pt@R=0 pt{
&&
	\textred{a11}
\\
&
	\textred{a1}
		\ar @{-} [ru]
		\ar @{-} [rd]
\\
&&
	\textred{a10}
\\
	a
		\ar @{-} [ruu]
		\ar @{-} [rdd]
\\
&&
	a01
\\
&
	a0
		\ar @{-} [ru]
		\ar @{-} [rd]
\\
&&
	a00
\\\\
	B_{n-1}
&
    B_n
        \ar @{->}_{f^{full}_{n-1}} [l]
&
    B_{n+1}
        \ar @{->}_{f^{drop}_n} [l]
}
\end{equation*}
\caption{$f_n^{drop}$}
\label{fig:fn_drop}
\end{figure}

\begin{rem}
We have the following remarks for 
Figure \ref{fig:fn_drop}.
\begin{enumerate}
\item
%
Since the agent evaluates stock and bond along the function $f_{n}^{drop}$,
she can recognise only the nodes $a0$, $a01$ and $a00$ 
and can not recognise the nodes $a1$, $a11$ and $a10$. 
We interpret these nodes $a1$, $a11$ and $a10$ as invisible. 

\item
The values
$
q_{n+1}(a11)
	\in [0, 1]
$
can be arbitrarily selected,
and
$q_{n+1}(a10)$
is computed by
$
1 - q_{n+1}(a10)
$.
That is, the probability measure
$\mathbb{Q}_{n+1}$
is not determined uniquely,
so is not the risk-neutral filtration
$\mathcal{C}$.

\item
The probability measure
$\mathbb{Q}_{n}$
is not equivalent to the original measure
$\mathbb{P}_{n}$.
Therefore, it is not an EMM.

\end{enumerate}

\end{rem}

\begin{figure}[h]
\begin{equation*}
\xymatrix@C=37 pt@R=0 pt{
&&&
	\textred{a111}
\\
&&
	\textred{a11}
		\ar @{-} [ru]
		\ar @{-} [rd]
\\
&&&
	\textred{a110}
\\
&
	a1
		\ar @{-} [ruu]
		\ar @{-} [rdd]
\\
&&&
	a101
\\
&&
	a10
		\ar @{-} [ru]
		\ar @{-} [rd]
\\
&&&
	a100
\\
	a
		\ar @{-} [ruuuu]
		\ar @{-} [rdddd]
\\
&&&
	\textred{a011}
\\
&&
	\textred{a01}
		\ar @{-} [ru]
		\ar @{-} [rd]
\\
&&&
	\textred{a010}
\\
&
	a0
		\ar @{-} [ruu]
		\ar @{-} [rdd]
\\
&&&
	a001
\\
&&
	a00
		\ar @{-} [ru]
		\ar @{-} [rd]
\\
&&&
	a000
\\\\
	B_{k-2}
&
	B_{k-1}
        \ar @{->}_{f^{full}_{k-2}} [l]
&
    B_k
        \ar @{->}_{f^{full}_{k-1}} [l]
&
    B_{k+1}
        \ar @{->}_{\textblue{f^{drop}_k}} [l]
}
\end{equation*}
\caption{drop-$k$ filtration}
\label{fig:dropk_filtration}
\end{figure}

\begin{rem}
Let
$
\textred{\mathcal{C}}
 : \omega \to \Prob
$
be a risk-neutral filtration,
and
$
\textred{Y}
 : B_{T} \to \mathbb{R}
$
be a \textblue{payoff} at time 
$
\textred{T}
$.
Then, 
for the agent who has 
a drop-$k$ filtration
as her subjective filtration,
the price of 
$Y$
at time 
$
\textred{n}
$
with a unique arrow
$
\textred{i_n} : n \to T
$
is given by
\begin{equation*}
\textred{Y_n}
	:=
E^{\mathcal{C}i_n}(
	b_T^{-1} Y
) .
\end{equation*}

\begin{equation*}
\xymatrix@C=15 pt@R=30 pt{
	n
		\ar @{->}_{i_n} [d]
		\ar @{|->}^{\mathcal{C}} [rr]
  &&
	\ProbC_n
		\ar @{->}_{\mathcal{C}{i_n}} [d]
		\ar @{|->}^{\mathcal{E}} [rr]
  &&
	\mathcal{E}(\ProbC_n)
		\ar @{}^-{:=} @<-6pt> [r]
  &
    L^{1}(\ProbC_n)
     \ar @{}^-{\ni} @<-6pt> [r]
  &
	Y_n = E^{\mathcal{C}{i_n}}(b_T^{-1} Y)
\\
    T
		\ar @{|->}^{\mathcal{C}} [rr]
  &&
	\ProbC_T
    	\ar @{|->}^{\mathcal{E}} [rr]
  &&
	\mathcal{E}(\ProbC_T)
		\ar @{}^-{:=} @<-6pt> [r]
		\ar @{->}_{
			\textred{\mathcal{E}({\mathcal{C}}{i_n})}
		} [u]
  &
    L^{1}(\ProbC_T)
     \ar @{}^-{\ni} @<-6pt> [r]
  &
	b_T^{-1} Y
     \ar @{|->}_{
       \textred{\mathcal{E}(\mathcal{C}{i_n})}
     } @<9pt> [u]
}
\end{equation*}

\end{rem}

For
$a \in B_{n-2}$,
you can see in 
Figure \ref{fig:val_fndrop}
that
at time
$n-1$
the value of
$Y_n(a1)$
is discarded and use only the value of
$Y_n(a0)$
for computing 
$Y_{n-1}(a)$.

\begin{figure}[h]
\begin{equation*}
\xymatrix@C=37 pt@R=0 pt{
&&
	\textred{Y_{n+1}(a11)}
\\
&
	\textred{Y_n(a1)}
		\ar @{-} [ru]
		\ar @{-} [rd]
\\
&&
	\textred{Y_{n+1}(a10)}
\\
	\textblue{Y_{n-1}(a) := Y_n(a0)}
		\ar @{-} [ruu]
		\ar @{-} [rdd]
\\
&&
	Y_{n+1}(a01)
\\
&
	Y_n(a0)
		\ar @{-} [ru]
		\ar @{-} [rd]
\\
&&
	Y_{n+1}(a00)
\\\\
	B_{n-1}
&
    B_n
        \ar @{->}_{f^{full}_{n-1}} [l]
&
    B_{n+1}
        \ar @{->}_{f^{drop}_n} [l]
}
\end{equation*}
\caption{Valuation along $f_n^{drop}$}
\label{fig:val_fndrop}
\end{figure}


\subsubsection{Replication Strategies}

Let us investigate the situation where
a given strategy
$
(\phi, \psi)
$
becomes a replication strategy of
the payoff 
$Y$
at time
$T$.

\begin{defn}{[Self-Financial Strategies]}
A
\newword{self-financial}
strategy
is 
a strategy
$
(\phi, \psi)
$
satisfying
\begin{equation}
S_n \phi_{n+1}
	+
b_n \psi_{n+1}
	=
V_n
\end{equation}
for
every
$
n = 1, 2, \dots
$.

\end{defn}

For a self-financial strategy
$
(
\phi_n, \psi_n
)_{n = 1, 2, \dots}
$,
we have:
\begin{align*}
V_{n+1}
	&=
S_{n+1} (\phi_{n+1} \circ f_n)
	+
b_{n+1} (\psi_{n+1} \circ f_n)
	\\&=
(S_n \circ f_n)(1 + \mu + \sigma \xi_{n+1})
(\phi_{n+1} \circ f_n)
	+
b_{n+1} 
(
b_n^{-1}
(V_n - S_n \phi_{n+1})
\circ f_n)
	\\&=
(1 + \mu + \sigma \xi_{n+1})
((S_n \phi_{n+1}) \circ f_n)
	+
(1+r)
(
(V_n - S_n \phi_{n+1})
\circ f_n
)
	\\&=
(\mu - r + \sigma \xi_{n+1})
((S_n \phi_{n+1}) \circ f_n)
	+
(1+r)
(
V_n 
\circ f_n
) .
\end{align*}
Therefore,
for
$a \in B_n$
and
$
d_{n+1}
	\in
\{
0, 1
\}
$,
\begin{equation}
\label{eq:vnPreEq}
V_{n+1}
(a d_{n+1})
	=
(\mu - r + \sigma (2 d_{n+1} - 1))
S_n(b_n) 
\phi_{n+1}(b_n)
	+
(1+r)
V_n(b_n)
\end{equation}
where
\begin{equation}
b_n
	:=
f_n(a d_{n+1}) .
\end{equation}

Now let us assume that there exists a function
$
g_n 
	:
B_n
	\to
B_n
$
such that
$
f_n =
	g_n
\circ
	f^{full}_n
$ .
\begin{equation*}
\xymatrix@C=40 pt@R=20 pt{
&
	B_n
\\
	B_{n+1}
		\ar @{->}^{f_n} [ru]
		\ar @{->}_{f^{full}_n} [rd]
\\
&
	B_n
		\ar @{.>}_{\textred{g_n}} [uu]
}
\end{equation*}
Then
$
f_n(a d_{n+1})
	=
g_n(a)
$
for every
$
a \in B_n
$
and
$
d_{n+1}
	\in
\{ 0, 1 \}
$.
So the equation
(\ref{eq:vnPreEq})
becomes
\begin{equation}
\label{eq:vnPreMEq}
V_{n+1}
(a d_{n+1})
	=
(\mu - r + \sigma (2 d_{n+1} - 1))
S_n(g_n(a)) 
\phi_{n+1}(g_n(a))
	+
(1+r)
V_n(g_n(a)) .
\end{equation}
Hence, we have:
\begin{align}
\label{eq:phiN}
\phi_{n+1}(g_n(a))
	&=
\frac{
	V_{n+1}(a1)
		-
	V_{n+1}(a0)
}{
	2 \sigma S_n(g_n(a))
}
\\
\label{eq:VN}
V_n(g_n(a))
	&=
\frac{
	(\sigma - \mu + r)
	V_{n+1}(a1)
		+
	(\sigma + \mu - r)
	V_{n+1}(a0)
}{
	2 \sigma (1+r)
} .
\end{align}

Therefore,
we can determine the appropriate strategy
$
(\phi_{n+1}, \psi_{n+1})
$
on 
$
g_n(B_n)
	\subset
B_n
$
by (\ref{eq:phiN}).
We actually do not care the values of
$
(\phi_{n+1}, \psi_{n+1})
$
on
$
B_n \setminus g_n(B_n)
$.

For example,
in the case of
$
f_n
	=
f^{drop}_n
$,
the function
$
g_n
	:
B_n \to B_n
$
satisfies
\begin{equation}
g_n(d_1 \dots d_{n-1} d_n)
	=
d_1 \dots d_{n-1} 0
\end{equation}
for all
$
d_1 \dots d_{n-1} d_n
	\in
B_n
$.
Looking at
Figure \ref{fig:val_fndrop},
values in
the region
$
B_n \setminus g_n(B_n)
$
are not necessary for computing
$
Y_{n-1}(a)
$.
Hence,
determining
the values of
$
(\phi_{n+1}, \psi_{n+1})
$
in
$
g_n(B_n)
$
is enough for making the practical valuation.


\section{Concluding Remarks}

We formulated an infinitely growing sequence of binomial probability spaces
in the category
$\Prob$.
%
We gave some concrete (possibly distorted) filtrations.
%
We determined the shape of the risk-neutral filtrations to the above examples.
%
We showed the valuations of claims given at time $T$
through the distorted filtrations,
and provided a replication strategy implementing the valuation.




\nocite{maclane1997}
\nocite{CK2012DM}

\def \EmbedBib  {1}

\if \EmbedBib  0
\bibliographystyle{apalike}
\bibliography{../../taka_e}
\else 

\fi 


\end{document}